\newtheorem{theorem}{Theorem}[section]
\newtheorem{lemma}{Lemma}[section]
\newcommand\be{\begin{equation}}
\newcommand\ee{\end{equation}}
\newcommand\ber{\begin{eqnarray}}
\newcommand\eer{\end{eqnarray}}
\newcommand\berr{\begin{eqnarray*}}
\newcommand\eerr{\end{eqnarray*}}
\newcommand\bea{\begin{eqnarray}}
\newcommand\eea{\end{eqnarray}}
\newcommand{\nn}{\nonumber}
\newcommand{\dd}{\mathrm{d}}
\newcommand\e{\mathrm{e}}\newcommand\pa{\partial}
\begin{document}

\title{Existence of Julia--Zee dyon and 't Hooft--Polyakov monopole with new field strength tensor}
\author{Lei Cao, Yilu Xu\footnote{E-mail address:
104753180630@vip.henu.edu.cn.}\\School of Mathematics and Statistics, Henan University\\
Kaifeng, Henan 475004, PR China}
\date{}
\maketitle

\begin{abstract}

In this paper, we use a modified Abelian field strength tensor in Georgi--Glashow model and obtain a new Julia--Zee dyon equations which degenerated into the 't Hooft--Polyakov monopole equations when the profile function $J=0$. Combining a three--step iterative shooting argument and a fixed--point theorem approach, we establish the existence of the static solution of the Julia--Zee dyon equations and discuss its qualitative properties. In addition, we show that the total magnetic charge is a constant related to the gauge coupling constant and the total electric charge depend continuously on the profile functions.
\end{abstract}

\begin{enumerate}

\item[]
{Keywords:} dyon, monopole, iterative shooting argument, fixed--point theorem

\item[]
{MSC numbers(2020):} 34B15, 34B40, 81T13

\end{enumerate}

\section{Introduction}\label{s1}
\setcounter{equation}{0}

Fifty years ago, 't Hooft \cite{Ho} and Polyakov \cite{Po} proposed a model for a magnetic monopole arising as a static solution of the classical equations for the $SU(2)$ Yang--Mills field coupled to an $SU(2)$ Higgs field. The model was later extended by Julia and Zee \cite{Ju} so that the monopole becomes a dyon, possessing both electric and magnetic charge. Besides, they also show that the finite--energy static solutions in the classical relativistic gauge field theory over the $(2+1)$--dimensional Minkowski spacetime must be electrically neutral. This statement is well known as Julia--Zee theorem which leads to many interesting consequences. For example, it makes it transparent that the static Abelian Higgs model is exactly the Ginzburg--Landau theory \cite{Gin} which is purely magnetic \cite{Ja,Nie}.

The solution of the 't Hooft--Polyakov--Julia--Zee model has long been studied with increasing enthusiasm since the work of Julia and Zee. When the coupling constant $\lambda=0$, Prasad and Sommerfield \cite{Pra} presented an exact Julia--Zee dyon solution to the nonlinear field equations which describe a classical excitation possessing magnetic and electric charge. This solution has finite energy and exhibits explicitly those properties which have previously been found by numerical analysis. Kasuya and Kamata \cite{Ka} studied the spherically symmetric Julia--Zee dyon solution of the coupled Yang--Mills--Higgs systems in curved space--time, and gave an exact solution for which the space--time metric takes the Reissner--Nordstr\"{o}m form. McLeod and Wang \cite{McL} showed the existence of solutions for the 't Hooft--Polyakov--Julia--Zee monopole when the coupling constant $\lambda=1$.

To throw more light on the mechanical properties of the monopole and the dyon and on the local stability requirement, Panteleeva \cite{Pan} eliminates the long--range contribution from the energy--momentum tensor of the monopole and the dyon. While such calculation resulting contribution cannot be uniquely defined. A modified 't Hooft definition of the Abelian field strength tensor is introduced to overcome this difficulty. Using the modified Abelian field strength tensor in Georgi--Glashow model, we obtain the new equations for the profile functions, whose solution contains both electric and magnetic so is a dyon. In particular, taking the profile function $J=0$, we get equations of motion describing only magnetic. In this paper, we analyse these equations to show that the corresponding boundary value problems possess a dyon or monopole solution and discuss its qualitative properties.

The paper is organised as follows. In Sec. 2, we present the Julia--Zee dyon equations and 't Hooft--Polyakov monopole equations of J. Yu. Panteleeva and state our main results for the corresponding solutions. Sec. 3 contains our results concerning the proof of the existence and uniqueness of the solution of each governing equation and the qualitative properties. In particular, we use a fixed point theorem to obtain a general solution to the corresponding boundary value problem. We also calculate the total magnetic charge and electric charge.

\section{Julia--Zee dyon and 't Hooft--Polyakov monopole}\label{s2}
\setcounter{equation}{0}

The simplest model in which magnetic monopoles exist is the gauge $SU(2)$ Georgi--Glashow model with Higgs triplet field $\phi^a, a=1,2,3$, which belongs to adjoint representation. The corresponding gauge invariant action of the model is
\be\label{2.1}
S=\int\dd^4 x\left[-\frac 14 F_{\mu\nu}^aF^{\mu\nu a}+\frac 12(D_\mu\phi)^a(D^\mu\phi)^a-\frac{\lambda}{4}(\phi^a\phi^a-v^2)^2\right],
\ee
where $\lambda$ is a dimensionless coupling constant, $v^2$ is the squared vacuum expectation value of the Higgs field, $\mu$ and $\nu$ are Lorenz indices. The covariant derivative and the non--Ablian field strength tensor are defined as follows
\bea
(D_\mu\phi)^a&=&\pa_\mu\phi^a+g\epsilon^{abc}A_\mu^b\phi^c,\label{2.1a}\\
F_{\mu\nu}^a&=&\pa_\mu A_\nu^a-\pa_\nu A_\mu^a+g\epsilon^{abc}A_\mu^bA_\nu^c,\label{2.1b}
\eea
where $g$ is the gauge coupling constant and $A_\mu^a$ is the gauge vector field. We introduce the $U(1)$ field strength tensor $\mathcal{F}_{\mu\nu}$ as
\be\label{2.1c}
\mathcal{F}_{\mu\nu}=\frac{\phi^a}{v}F_{\mu\nu}^a-\frac{1}{gv^3}\varepsilon^{abc}\phi^aD_\mu\phi^bD_\nu\phi^c,
\ee
which is smooth as the Faddeev's \cite{Fa} and Boulware's \cite{Bo} definitions and coincides with the 't Hooft's \cite{Ho} definition at long distances.

Set $r=|x|$. In the static soliton configuration with finite energy, choosing the spherically--symmetric ansatz
\bea
\phi^a&=&\frac{x^a}{r}v h(r),\label{2.1d}\\
A_i^a&=&\frac{1}{g r^2}\epsilon_{aib}x^b(1-F(r)),\label{2.1e}\\
A_0^a&=&\frac{J(r)x^a}{gr^2}.\label{2.1f}
\eea
Then the equations of motion of \eqref{2.1} can be derived as follows
\bea
&&D_0F_{00}^a-D_i F_{i0}^a+g\epsilon^{abc}\phi^bD_0\phi^c=0,\label{2.1g}\\
&&D_0F_{0k}^a-D_i F_{ik}^a-g\epsilon^{abc}\phi^cD_k\phi^b=0,\label{2.1h}\\
&&D_0D_0\phi^a-D_i D_i\phi^a+\lambda(\phi^b\phi^b-v^2)\phi^a=0.\label{2.1i}
\eea
In terms of the profile functions with the dimensionless variable $\rho=gvr$, and using prime to denote differentiation with respect to the radial variable $\rho$, we have
\bea
&&F''\rho^2=(F^2-1)F-(J^2-\rho^2h^2)F,\label{2.2}\\
&&h''\rho^2+2h'\rho=2F^2h+\frac{\beta^2}{2}h(h^2-1)\rho^2,\label{2.3}\\
&&J''\rho^2=2JF^2,\label{2.4}
\eea
where $\beta^2=2\lambda/g^2$. With respect to the regularity requirement and the finite energy, $F, h, J$ must satisfy
\bea
F(\rho)&=&1,~~~h(\rho)=0,~~~J(\rho)=0,~~~\text{as}~~\rho\to 0,\label{2.5}\\
F(\rho)&=&0,~~~h(\rho)=1,~~~J(\rho)/\rho=C,~~~\text{as}~~\rho\to \infty.\label{2.6}
\eea
We assume as \cite{Pan} that $0\leq C<1$.

Inserting \eqref{2.1d}--\eqref{2.1f} into \eqref{2.1c} and using \eqref{2.1a} and \eqref{2.1b}, we find the electric and magnetic fields, $\mathbf{E}=(E^i)$ and $\mathbf{B}=(B^i)$, as follows,
\bea
E^i&=&-\mathcal{F}^{0i}=\frac{x^i h(r)}{gr}\frac{\dd}{\dd r}\left(\frac{J(r)}{r}\right),\label{2.13}\\
B^i&=&-\frac 1 2\epsilon_{ijk}\mathcal{F}^{jk}=\frac{x^i}{2g|r|^3\left(h(r)(1-F^2(r))+h^3(r)F^2(r)\right)}.\label{2.14}
\eea

From \eqref{2.13} it can be seen that, if $J=0$, then $\mathbf{E}=\mathbf{0}$ and there is no electric field. By integrating \eqref{2.14}, the magnetic charge $q_m$ can be obtained
\bea\label{2.15}
q_m&=&\frac{1}{4\pi}\lim_{r\to\infty}\oint_{|x|=r}\mathbf{B}\cdot \dd \mathbf{S}\nn\\
&=&\frac{1}{2g}\lim_{r\to\infty}\frac{1}{h(r)(1-F^2(r))+h^3(r)F^2(r)}=\frac{1}{2g}.
\eea

If we fix zero component of a vector field through the gauge $A_0^a=0$ to have zero electric field, combining the spherically--symmetric ansatz \eqref{2.1d}--\eqref{2.1e}. Then we can get the equations of motion that can be computed by varying the action with respect to scalar and vector fields, which for the static case reduce to the following form
\bea
&&D_iF_{ij}^a=g\epsilon^{abc}\phi^b(D_j\phi)^c,\label{2.1j}\\
&&D_i(D_i\phi)^a=\lambda(\phi^b\phi^b-v^2)\phi^a.\label{2.1k}
\eea
Furthermore, the equations of the profile functions \eqref{2.2}--\eqref{2.4} are simplified to the following form
\bea
&&F''\rho^2=(F^2-1)F+\rho^2h^2F,\label{2.16}\\
&&h''\rho^2+2h'\rho=2F^2h+\frac{\beta^2}{2}h(h^2-1)\rho^2,\label{2.17}
\eea
and the corresponding boundary conditions read
\bea
F(\rho)&=&1,~~~h(\rho)=0,~~~\text{as}~~\rho\to 0,\label{2.18}\\
F(\rho)&=&0,~~~h(\rho)=1,~~~\text{as}~~\rho\to \infty.\label{2.19}
\eea

For the nonlinear ordinary differential equations \eqref{2.2}--\eqref{2.4} and \eqref{2.16}--\eqref{2.17} with the corresponding boundary conditions, inspired by the literature \cite{McLeod1,Bizon,Hastings,Smoller,Wang,Mcleod2,Chen}, we develop the methods and techniques in which we establish the existence of the solutions and study the related properties of the solutions.

Then the main results of this paper are stated as follows.

\begin{theorem}\label{th2.1}
For any $\lambda>0$ and $0\leq C<1$, the non--Abelian gauge field equations \eqref{2.1g}--\eqref{2.1i} have a finite--energy static solution $(\mathbf{A_{\mu}},\phi)$ defined by \eqref{2.1d}--\eqref{2.1f} so that the obtained solution configuration functions of the radially symmetric Julia-Zee dyon equations \eqref{2.2}--\eqref{2.4} have the following properties

{\rm(i)} The function $F(\rho),h(\rho),J(\rho)\in C^{1}\left([0,+\infty)\right)$ and $F'(0)=J'(0)=0${\rm;}

{\rm(ii)} $0\leq F(\rho),h(\rho)\leq1,$ $0\leq J'(\rho)\leq C,\,\forall\rho\geq 0{\rm;}$ $h(\rho)\,,J(\rho)$ increasing, $F(\rho)$ decreasing{\rm;}

{\rm(iii)} $\rho^{-2}(1-F(\rho))\,,\rho^{-2}J(\rho)\,,\rho^{-1}h(\rho)$ decreasing and all bounded as $\rho\rightarrow 0{\rm;}$

{\rm(iv)} $\lim\limits_{r\to0}F(\rho)=1,~\lim\limits_{r\to0}h(\rho)=0,~\lim\limits_{r\to0}J(\rho)=0{\rm;}$

{\rm(v)} $F(\rho)=O(\e^{-(1-\varepsilon)\rho}), h(\rho)=1+O(\e^{-\beta(1-\varepsilon)\rho}), J''(\rho)=O(\e^{-2(\sqrt{1-C^2}-\varepsilon)\rho})$
as $\rho\to\infty$, where $0<\varepsilon<1$ is arbitrary{\rm;}

{\rm(vi)} This solution carries a electric charge $q_e$,
\bea
q_e=\frac{1}{4\pi}\lim_{r\to\infty}\oint_{|x|=r}\mathbf{E}\cdot \dd \mathbf{S}=\frac{1}{g}\int_0^\infty \left(\frac{2F^2Jh}{r}-h'J+rh'J'\right)\dd r,
\eea
and a magnetic charge $q_m=1/2g$.
\end{theorem}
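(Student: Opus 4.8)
The plan is to establish existence of a solution to the coupled system \eqref{2.2}--\eqref{2.4} by the three-step iterative shooting procedure announced in the introduction, decoupling the system and treating each equation on a fixed background. First I would treat the system as a map on an appropriate function space: given candidate profiles $h$ and $J$ satisfying the target bounds $0\le h\le 1$ and $0\le J'\le C$, equation \eqref{2.2} for $F$ is a singular boundary value problem on $[0,\infty)$ with the linear zeroth-order coefficient $(F^2-1)-(J^2-\rho^2 h^2)$; I would solve this by a shooting argument in the initial slope, using the regularity condition $F(0)=1,\,F'(0)=0$ from \eqref{2.5} and the decay $F(\infty)=0$ from \eqref{2.6}. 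The monotone structure is the key leverage here: one shows by a comparison/maximum-principle argument that the solution stays in $[0,1]$ and is decreasing, which simultaneously yields the bounds and the monotonicity claims of (ii). With $F$ in hand, \eqref{2.3} is a semilinear problem for $h$ with the defocusing cubic nonlinearity $\tfrac{\beta^2}{2}h(h^2-1)$ plus the forcing term $2F^2 h$, again solved by shooting with $h(0)=0,\,h(\infty)=1$ and kept in $[0,1]$, increasing, by the same comparison technique. Finally \eqref{2.4}, $J''\rho^2=2JF^2$, is \emph{linear} in $J$ once $F$ is fixed, so its analysis reduces to a linear two-point problem with $J(0)=0$ and the asymptotic normalization $J(\rho)/\rho\to C$; positivity of the coefficient $2F^2/\rho^2$ forces $J$ convex-type behaviour giving $0\le J'\le C$ and monotonicity.

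Next I would assemble these three solution operators into a single fixed-point scheme. The natural approach is to define the iteration $(h_n,J_n)\mapsto F_n\mapsto (h_{n+1},J_{n+1})$ and show it preserves the closed, bounded, convex set $\mathcal{K}=\{(h,J): 0\le h\le 1,\ 0\le J'\le C,\ h\uparrow,\ J\uparrow\}$ in a suitable weighted $C^1$ or Hölder space on compact subintervals exhausting $[0,\infty)$. I would verify that each of the three subproblem solution maps is continuous and that the composite map is compact (Arzelà--Ascoli on the exhaustion, with uniform decay estimates at infinity controlling the tails), so that Schauder's fixed-point theorem applies and produces a genuine solution of the full coupled system lying in $\mathcal{K}$. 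This simultaneously delivers properties (i)--(ii): the $C^1$ regularity and the endpoint conditions $F'(0)=J'(0)=0$ fall out of the regularity analysis of the singular point $\rho=0$, while the range and monotonicity bounds are exactly the invariance conditions defining $\mathcal{K}$.

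Properties (iii)--(v) are then extracted from the constructed solution by local and asymptotic analysis. For (iii), near $\rho=0$ I would expand each equation, using $F(0)=1$ to get $1-F(\rho)=O(\rho^2)$ and $h(\rho)=O(\rho)$, $J(\rho)=O(\rho^2)$, and show the quotients $\rho^{-2}(1-F)$, $\rho^{-2}J$, $\rho^{-1}h$ are monotone via a differential inequality for the ratio. Property (iv) is just the boundary data \eqref{2.5} read off the solution. For the decay rates in (v) I would linearize about the vacuum $(F,h,J/\rho)=(0,1,C)$ at infinity: the $F$ equation linearizes to $F''\approx(1-C^2)F$ but the stated rate $O(\e^{-(1-\varepsilon)\rho})$ suggests the effective mass from the $h^2\rho^2$ term dominates, giving exponential decay with rate approaching $1$; the $h$ equation linearizes to $(h-1)''\approx\beta^2(h-1)$ yielding the rate $\beta$; and $J''=2JF^2/\rho^2$ inherits the rate $2\sqrt{1-C^2}$ from the square of $F$'s decay, consistent with $J\sim C\rho$. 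These are obtained rigorously by standard comparison with exponential barriers.

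The main obstacle I anticipate is the fixed-point step rather than the individual shooting arguments. The coupling is genuinely two-way: $F$ depends on both $h$ and $J$ through the coefficient $J^2-\rho^2 h^2$ in \eqref{2.2}, whose sign is not fixed, so the comparison argument that keeps $F\in[0,1]$ must be set up carefully to exploit the constraint $J^2\le C^2\rho^2\le\rho^2 h^2$ near infinity (using $C<1$ and $h\to1$) without losing control on compact sets where $h$ is small. Establishing that the set $\mathcal{K}$ is truly invariant under the composite map --- in particular that the $F$ produced from an arbitrary admissible $(h,J)$ feeds back to keep $h$ increasing and $J'$ in $[0,C]$ --- is the delicate point, since the asymptotic normalization $J(\rho)/\rho\to C$ is an infinite-interval condition that interacts with the compactness argument at the tail. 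I expect this to require uniform weighted estimates on the exhaustion $[0,R]$ together with the exponential-barrier tail control from (v) to pass to the limit $R\to\infty$ and close the Schauder argument. Property (vi) is then a direct computation: substituting the ansatz into \eqref{2.13} and integrating the flux gives the stated $q_e$ integral, while $q_m=1/2g$ is already established in \eqref{2.15} from the topological form of $\mathbf{B}$.
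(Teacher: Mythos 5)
Your proposal follows essentially the same route as the paper: a three-step decoupling in which the $F$-equation is solved by a shooting/connectedness argument for given $(J,h)$, the $h$- and $J$-equations are then solved on the resulting background (with the $J$-equation exploited as linear), and the composite map $(J,h)\mapsto(\tilde J,\tilde h)$ is shown to preserve a closed bounded convex set in a weighted space and to be continuous and compact, so that Schauder's theorem closes the argument; the qualitative properties and the charges are then read off exactly as you describe. The only differences are matters of detail (the paper works in a weighted $C^0$ space with weights $\rho^{-1-k}(1+\rho^k)$ and $\rho^{-k}(1+\rho^k)$ rather than a weighted $C^1$/H\"older exhaustion), not of substance.
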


\begin{theorem}\label{th2.2}
The 't Hooft-Polyakov monopole equations \eqref{2.1j}--\eqref{2.1k} have a family of static finite energy smooth solutions which satisfy the radial symmetry properties given in \eqref{2.1d}--\eqref{2.1e} and $A_0^a=0$. The obtained solution configuration functions $(F(\rho), h(\rho))$ have the properties that $F=F(\rho)$ and $h=h(\rho)$ are monotone functions whose ranges are strictly confined between the corresponding limiting values of $F(\rho)$ and $h(\rho)$ at $\rho=0$ and $\rho=\infty$, respectively, as stated in \eqref{2.18}--\eqref{2.19}.
\end{theorem}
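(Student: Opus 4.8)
The plan is to obtain Theorem \ref{th2.2} as the degenerate $C=0$ case of the dyon problem already settled in Theorem \ref{th2.1}, rather than re-running the whole construction. First I would set $C=0$ in Theorem \ref{th2.1}, producing a solution triple $(F,h,J)$ of \eqref{2.2}--\eqref{2.4} under \eqref{2.5}--\eqref{2.6}. Property (ii) of that theorem gives $0\le J'(\rho)\le C=0$ for all $\rho\ge0$, so $J'\equiv0$; together with $J(0)=0$ this forces $J\equiv0$. Substituting $J\equiv0$ into \eqref{2.2}--\eqref{2.3} reduces them verbatim to \eqref{2.16}--\eqref{2.17}, while \eqref{2.5}--\eqref{2.6} collapse to \eqref{2.18}--\eqref{2.19}; moreover $J\equiv0$ makes $A_0^a=J(r)x^a/(gr^2)$ vanish, matching the monopole ansatz. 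Hence the pair $(F,h)$ inherited from the $C=0$ dyon is exactly a solution of the 't Hooft--Polyakov system \eqref{2.1j}--\eqref{2.1k} with the radial symmetry \eqref{2.1d}--\eqref{2.1e} and $A_0^a=0$.

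Next I would transfer the qualitative properties. Finite energy is immediate, since the present configuration is precisely the finite-energy dyon solution at $C=0$. Smoothness on $(0,\infty)$ follows from the $C^1$ regularity in Theorem \ref{th2.1}(i) by a standard bootstrap: the right-hand sides of \eqref{2.16}--\eqref{2.17} are polynomial in $F,h$ with coefficients smooth away from the origin, so repeated differentiation promotes $C^1$ to $C^\infty$, and the regular expansions $F=1+O(\rho^2)$, $h=O(\rho)$ near $\rho=0$ give smoothness up to the boundary. The bounds $0\le F,h\le1$, the monotonicity of $F$ and $h$, and the limits at $\rho=0$ and $\rho=\infty$ are read off from (ii), (iv) and \eqref{2.19}.

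The technical heart is upgrading the non-strict inequalities of Theorem \ref{th2.1}(ii) to strict confinement, namely $0<F(\rho)<1$ and $0<h(\rho)<1$ for every $\rho\in(0,\infty)$. Since $F$ is monotone from $1$ to $0$ and $h$ is monotone from $0$ to $1$, it suffices to exclude interior contact with these endpoint values. At an interior minimum with $F(\rho_2)=0$ one has $F'(\rho_2)=0$, so viewing \eqref{2.16} as the linear equation $\rho^2F''=[(F^2-1)+\rho^2h^2]\,F$ and comparing with the zero solution, uniqueness of the initial value problem forces $F\equiv0$ on $(0,\infty)$, contradicting $F(0)=1$. At an interior maximum with $F(\rho_0)=1$ one computes from \eqref{2.16} that $F''(\rho_0)=h(\rho_0)^2$, which must be $\le0$ at a maximum, hence $h(\rho_0)=0$; as $h$ is nonnegative, increasing and vanishes at the origin, this yields $h\equiv0$ on $[0,\rho_0]$ and then, by uniqueness for \eqref{2.17}, $h\equiv0$, contradicting $h(\infty)=1$. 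The contacts $h=0$ and $h=1$ are excluded symmetrically through the sign of the second derivative at a would-be interior extremum together with the companion equation. I expect this strict-confinement step, i.e. ruling out interior contact with the boundary values, to be the main obstacle.

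Finally, the word \emph{family} refers to the dependence on the coupling: for each $\lambda>0$ (equivalently each $\beta=\sqrt{2\lambda}/g$) the above produces one solution, and letting $\lambda$ range over $(0,\infty)$ yields the asserted one-parameter family. As an alternative that avoids invoking Theorem \ref{th2.1}, one could prove Theorem \ref{th2.2} independently by the three-step iterative shooting argument on the coupled pair $(F,h)$: shoot $F$ from $F(0)=1$, $F'(0)=0$ and $h$ from $h(0)=0$ with a free slope, show that the shooting sets on which the orbit overshoots or undershoots the target $(F,h)=(0,1)$ are open and nonempty, and extract a connecting orbit on their common boundary; there the coupling between the $F$- and $h$-equations is the principal difficulty.
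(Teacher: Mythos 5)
Your main route---recovering Theorem \ref{th2.2} as the $C=0$ specialization of Theorem \ref{th2.1}---is an attractive shortcut, but it is not the paper's argument and, more importantly, it rests on a case of Theorem \ref{th2.1} that the paper's own proof does not actually deliver. The existence proof of Theorem \ref{th2.1} runs the Schauder argument on the set $\mathcal{S}$, whose definition includes the strict condition $0<J(\rho)/\rho\leq C$, and Lemma \ref{le3.1} hypothesizes $0<J(\rho)$; Lemma \ref{le5.1} likewise works with $c>0$, which forces $\tilde J>0$ and $\tilde J'>0$. When $C=0$ the set $\mathcal{S}$ is empty and the whole fixed-point scheme becomes vacuous, so you cannot simply quote Theorem \ref{th2.1} at $C=0$ and read $J\equiv 0$ off from property (ii). To make your reduction rigorous you would first have to check that the construction survives the degeneration to $C=0$ (equivalently, with $J\equiv0$ inserted by hand), and that check is essentially the proof the paper has in mind: the paper omits the proof of Theorem \ref{th2.2} precisely because it is the same three-step shooting and fixed-point machinery applied to the reduced system \eqref{2.16}--\eqref{2.17}, i.e.\ Lemma \ref{le3.1} with $J\equiv0$ for the $F$-equation, Lemma \ref{le4.1} unchanged for the $h$-equation, and a Schauder argument in the single unknown $h$, the $J$-component and Lemma \ref{le5.1} dropping out entirely. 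Your closing ``alternative'' paragraph is therefore the intended proof, not a fallback.

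The remainder of your write-up is sound in outline: once a solution pair $(F,h)$ of \eqref{2.16}--\eqref{2.17} with the monotonicity and bounds of Theorem \ref{th2.1}(ii) is in hand, your exclusion of interior contact with the boundary values (via the sign of $F''$ and $h''$ at a would-be interior extremum together with uniqueness for the associated linear equations) correctly upgrades $0\le F,h\le 1$ to strict confinement, and the bootstrap from $C^{1}$ to smoothness away from $\rho=0$ is routine. Note only that the strict inequalities also require the non-triviality input $F\not\equiv 1$ and $h\not\equiv 0$ near the origin, which comes from the shooting parameters $a<0$ and $b>0$ in Lemmas \ref{le3.1} and \ref{le4.1} rather than from the boundary conditions alone.
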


Since the proof of Theorem \ref{th2.2} is similar to that of Theorem \ref{th2.1}, we will omit the details.

As the end of this section, in order to prove the theorem \ref{th2.1}, we state the arrangement of next section as follows. In Section 3,
a three--step iterative approach is employed to show the existence of the Julia--Zee dyon solutions. We first give a series of lemmas as the primary works to show the existence and uniqueness of the Julia--Zee dyon solutions of each nonlinear ordinary differential equation. In addition, properties of the solutions are established. Finally, by applying the Schauder fixed--point theorem, we solve the two--point boundary value problem \eqref{2.2}--\eqref{2.6}.

\section{Proof of existence result}\label{s3}
\setcounter{equation}{0}

In this section, we start our construction of a solution pair of \eqref{2.2}--\eqref{2.4} subject to the boundary conditions \eqref{2.5}--\eqref{2.6} by a multiple shooting method.

\subsection{The equation governing the $F$--component}

In this subsection, we consider the $F$--component equation \eqref{2.2} for a pair of suitable fixed functions $J, h$. This construction will be performed in four steps.

\begin{lemma}\label{le3.1}
Given a pair of functions $J(\rho),\,h(\rho)$ in $C\left([0,+\infty)\right)$ such that $J(\rho),\,h(\rho)$ increasing, $0<J(\rho),\,\rho h(\rho)\leq \rho^{1+k}R^{*}$ for all $0<\rho\leq1$ and $J(0)=h(0)=0,\,h(\infty)=1,$ $J(\rho)/\rho=C$ as $\rho\to \infty$, where $0\leq C<1,$ $0<k<1$ and $R^{*}$ is a positive constant. We can find a unique continuously differentiable solution $F(\rho)$ satisfying $\eqref{2.2}$ subject to the conditions $F(0)=1,\,F(\infty)=0$, along with $F'(0)=0$, $\rho^{-2}(F-1)(\rho)$ increasing in $\rho$ and bounded as $\rho\to 0$. Besides, $\rho^{-2}(F-1)(\rho)\leq R^{**}$ for $\rho<1$, where $R^{**}=R^{**}(R^{*},k,N)>0$ and $N$ is a positive constant. Furthermore, we have the sharp asymptotic estimates
\bea
F(\rho)&=&1+a\rho^2+O(\rho^{2k+2}),~~-{(N{R^{*}}^{2})^{\frac{1}{1+k}}}<a<0,~~0<k<1,~~\rho\to 0,\label{00}\\
F(\rho)&=&O(\e^{-(1-\varepsilon)\rho}),~~~\rho\to\infty,~~~\varepsilon>0.\label{11}
\eea
\end{lemma}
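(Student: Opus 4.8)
The plan is to read \eqref{2.2} as a scalar second-order problem with a regular-singular point at $\rho=0$ and to run a one-parameter shooting argument, the single free parameter being the constant left undetermined by the regularity requirement; the construction splits into a local step, a global shooting step, and steps for the qualitative properties and the asymptotics. First I would rewrite \eqref{2.2} as
\[
F''=\frac{F}{\rho^2}\bigl(F^2-1-J^2+\rho^2h^2\bigr)
\]
and linearise about the value $1$ by putting $F=1+u$. Near $\rho=0$ the leading operator is the Euler operator $\rho^2u''-2u$, whose indicial exponents are $2$ and $-1$; imposing $F(0)=1$ and $F'(0)=0$ kills the $\rho^{-1}$ branch and leaves one free constant $a=\lim_{\rho\to0}\rho^{-2}(F-1)$. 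Inverting this operator turns the problem into an integral equation that I would solve by contraction on a short interval $[0,\rho_0]$, obtaining for each $a<0$ a unique local solution $F_a$ depending continuously on $a$. The forcing $\rho^2h^2$ is of size $O(\rho^{2+2k})$ by the hypothesis $\rho h\le\rho^{1+k}R^{*}$, and since $0<k<1$ its exponent is not resonant with $2$ or $-1$; this gives at once the local expansion $F=1+a\rho^2+O(\rho^{2k+2})$ of \eqref{00}.

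Next I would continue each $F_a$ to its maximal interval and partition the parameter range by
\[
\mathcal A=\{a<0:\ F_a(\rho_1)=0\ \text{for some finite}\ \rho_1\},\qquad
\mathcal B=\{a<0:\ \inf_{\rho\ge0}F_a(\rho)>0\}.
\]
The sign mechanism is that for small $\rho$ with $0<F<1$ the bracket is negative, so $F''<0$ and $F$ is pushed down, whereas for large $\rho$ the term $\rho^2h^2-J^2\sim(1-C^2)\rho^2$ dominates and forces $F''>0$; thus very negative $a$ descends too fast and lands in $\mathcal A$, while $a$ near $0$ stays up and lands in $\mathcal B$. Continuous dependence makes both sets open, and I would confirm each is nonempty by sub/supersolution comparison; connectedness of $(-\infty,0)$ then produces a critical $a^{*}$ in neither set, so $F_{a^{*}}>0$ with infimum $0$, which one upgrades to $F_{a^{*}}\to0$. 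Monotone dependence of $F_a$ on $a$, a comparison-principle fact valid while the trajectories stay ordered, gives uniqueness of $a^{*}$ and hence of $F$.

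With the critical solution fixed I would obtain $0\le F\le1$ by trapping it between the super- and sub-solutions $F\equiv1$ and $F\equiv0$, and deduce monotone decrease from the sign of $F'$. Differentiating $\rho^{-2}(F-1)$ and using \eqref{2.2} produces a first-order inequality which, read together with the limits $a$ at $0$ and $0$ at $\infty$, shows $\rho^{-2}(F-1)$ is increasing and bounded, with the bound $R^{**}$ on $[0,1)$ and the window $-(NR^{*2})^{1/(1+k)}<a<0$; the exponent $1/(1+k)$ arises from inverting the forcing threshold $\rho^{2+2k}R^{*2}\sim1$ that marks where the solution is turned around. For \eqref{11} I would note that $\rho^{-2}(F^2-1-J^2+\rho^2h^2)$ tends to the positive limit $1-C^2$ as $\rho\to\infty$, so beyond some radius $F$ satisfies a linear inequality $F''\ge\mu^2F$ with $\mu>0$, and comparison with the exponential barrier $\e^{-\mu\rho}$ yields the stated decay.

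I expect the main obstacle to be the global shooting connection in the second step: because the coefficient carries the singular factor $1/\rho^2$ and depends on the externally prescribed, non-autonomous data $J$ and $h$, the standard autonomous phase-plane and energy arguments do not apply directly, so ruling out oscillation and the spurious limits of $F_{a^{*}}$, and propagating the ordering of trajectories across the whole half-line so as to secure uniqueness, is where the real difficulty lies. The secondary technical point is extracting the $a$-window in \eqref{00} from the one-sided forcing bound on $h$.
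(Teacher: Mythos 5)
Your proposal follows essentially the same route as the paper: the substitution $F=1+f$, inversion of the Euler operator with indicial roots $2$ and $-1$ to obtain an integral equation solved by iteration with shooting parameter $a<0$, a connectedness argument on two disjoint open parameter sets whose complement supplies the connecting orbit, the window $|a|^{1+k}\lesssim N{R^*}^2$ extracted from the forcing bound $\rho h\le \rho^{1+k}R^*$, and a linear comparison function for the exponential decay at infinity. The only substantive difference is that the paper defines its open sets by conditions detectable at finite $\rho$ ($f'$ turning positive, resp.\ $f$ crossing $-1$, with a short argument ruling out the tangential case $f=-1$, $f'=0$) and proves uniqueness by a Sturm comparison of two putative solutions, whereas your set $\{a:\inf_{\rho\ge 0}F_a>0\}$ requires the extra observation that a positive infimum forces $F_a'$ to become positive at some finite radius (since otherwise $F_a$ decreases to a limit $L>0$ and $F_a''\to L(1-C^2)>0$) before openness follows from continuous dependence, and your monotone-dependence uniqueness claim would likewise need to be substantiated.
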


\begin{proof}
In order to show the existence and uniqueness of the equation $\eqref{2.2}$ with boundary condition $F(0)=1,\,F(\infty)=0$, We split the proof into four steps.

Step 1{\rm:} The existence and uniqueness of the local solution to the initial value problem. If we set $f(\rho)=F(\rho)-1$, then since $f(0)=0$, equation $\eqref{2.2}$ can be written in the form
\begin{eqnarray}
\label{3.1}
f''=\frac{2f}{\rho^{2}}+h^{2}(f+1)+\frac{f^{3}+3f^{2}-J^{2}(f+1)}{\rho^{2}},\,\,\,\,\rho>0.
\end{eqnarray}
It is worth noting that the two linearly independent solutions of equation
\begin{eqnarray}
\label{3.2}
f''-\frac{2f}{\rho^{2}}=0,\,\,\,\,\rho>0
\end{eqnarray}
are $\rho^{2}$ and $\rho^{-1}$. Since $f(0)=0$, we claim that the solution of equation $\eqref{3.1}$ must satisfy $f(\rho)=O\left(\rho^{2}\right)\,(\rho\rightarrow0)$. With $(J(\rho),h(\rho))$ given, we stress that $f(\rho)$ may locally be labeled by an initial parameter $a$ as follows
\begin{eqnarray}
\label{3.3}
f''(\rho;a)=\frac{2f(\rho;a)}{\rho^{2}}+h^{2}(f(\rho;a)+1)+\frac{f^{3}(\rho;a)+3f^{2}(\rho;a)-J^{2}(f(\rho;a)+1)}{\rho^{2}},
\end{eqnarray}
\begin{eqnarray}
\label{3.4}
f(\rho;a)=a\rho^{2}+O\left(\rho^{2k+2}\right),\,\,0<k<1,\,\,\rho\rightarrow0,
\end{eqnarray}
where $f(\rho;a)$ could be used to denote the dependence of the local solution to $\eqref{3.1}$ on a certain constant $a<0$. Such parameter $a$ may be viewed as a shooting parameter in our method. Next, we prove the judgment that $\eqref{3.1}$ has a unique local solution near $\rho=0$ with $\eqref{3.4}$ for any fixed $a<0$ and $(J(\rho),h(\rho))$ given.

Using the basic theory of ordinary differential equations, two linearly independent solutions $\rho^{2}$ and $\rho^{-1}$ of $\eqref{3.2}$ and the condition of $f(0;a)=0$, we can transform $\eqref{3.3}$ into the integral form
\begin{eqnarray}
\label{3.5}
f(\rho;a)\!=\!a\rho^{2}\!\!+\!\!\frac{1}{3}\int_0^\rho\bigg(\frac{\rho^{2}}{r}\!-\!\frac{r^{2}}{\rho}\bigg)\bigg\{\bigg(h^{2}(r)\!-\!\frac{J^{2}(r)}{r^{2}}\bigg)(f(r;a)\!+\!1)\!+\!\frac{3f^{2}(r;a)\!\!+\!\!f^{3}(r;a)}{r^{2}}\bigg\}\dd r,
\end{eqnarray}
where $a$ is an arbitrary constant. The equation $\eqref{3.3}$ can now be easily solved according to Picard iteration as follows, at least near $\rho=0$. Set $f_{0}(\rho)=a\rho^{2}$ and
\begin{eqnarray}
\label{3.6}
g(\rho,r,f_{0}(r))=\frac{1}{3}\bigg(\frac{\rho^{2}}{r}-\frac{r^{2}}{\rho}\bigg)\bigg\{\bigg(h^{2}(r)-\frac{J^{2}(r)}{r^{2}}\bigg)({f_{0}}(r)+1)+\frac{3{f_{0}}^{2}(r)+{f_{0}}^{2}(r)}{r^{2}}\bigg\},
\end{eqnarray}
where $\rho,r>0.$
Assume
\begin{equation*}
  f_{n+1}(\rho)=a\rho^{2}+\int_0^\rho g(\rho,r,f_{n}(r))dr,~~~n=0,1,2,\cdots
\end{equation*}
Indeed, since the assumption on $J(\rho)$ and $h(\rho)$, we have
\begin{eqnarray}
\label{3.7}
|f_{1}(\rho)-f_{0}(\rho)|&\leq&\int_0^{\rho}\left|g(\rho,r,f_{0}(r))\right|\dd r\notag\\[1mm]
&\leq&\frac{2}{3}\int_0^{\rho}\frac{\rho^{2}}{r}\left(2r^{2k}{R^{*}}^{2}+3|a|^{2}r^{2}\right)\dd s\notag\\[1mm]
&\leq&\frac{2}{3k}{R^{*}}^{2}\rho^{2k+2}+|a|^{2}\rho^{4}
\leq M_{1}\rho^{{2k+2}},~~~\rho\leq1,
\end{eqnarray}
where $M_{1}=\frac{2}{3k}{R^{*}}^{2}+|a|^{2}$. Therefore,
\begin{eqnarray}
\label{3.8}
|f_{1}(\rho)|\leq M_{1}\rho^{2k+2}+|a|\rho^{2} \leq M_{2}\rho^{2},~~~\rho\leq\min\bigg\{1,\sqrt{\frac{1}{|a|}}\bigg\},
\end{eqnarray}
where $M_{2}=\frac{2}{3k}{R^{*}}^{2}+|a|^{2}+|a|$. Then, for $\rho\leq\min\left\{1,\sqrt{\frac{1}{4M_{2}}},\sqrt{\frac{1}{|a|}}\right\}$, we get
\begin{eqnarray}
\label{3.9}
|f_{2}(\rho)-f_{0}(\rho)|&\leq&\int_0^{\rho}\left|g(\rho,r,f_{1}(r))\right|\dd r\notag\\[1mm]
&\leq&\frac{2}{3}\int_0^{\rho}\frac{\rho^{2}}{r}\left|2r^{2k}{R^{*}}^{2}+\frac{3f_{0}^{2}(r)}{r^{2}}+\frac{3(f_{1}^{2}(r)-f_{0}^{2}(r))}{r^{2}}\right|\dd r\notag\\[1mm]
&\leq&\frac{2}{3}\int_0^{\rho}\frac{\rho^{2}}{r}\left(2r^{2k}{R^{*}}^{2}+3|a|^{2}r^{2}\right)dr+4M_{2}\int_0^{\rho}\frac{\rho^{2}}{r}|f_{1}(r)-f_{0}(r)|\dd r\notag\\[1mm]
&\leq&M_{1}\rho^{2k+2}+4M_{2}M_{1}\rho^{2k+4}\leq2M_{1}\rho^{2k+2}
\end{eqnarray}
and
\begin{eqnarray}
\label{3.10}
|f_{2}(\rho)|\leq M_{2}\rho^{2}.
\end{eqnarray}
Generally, for $\rho\leq\min\left\{1,\sqrt{\frac{1}{4M_{2}}},\sqrt{\frac{1}{|a|}}\right\}$, we obtain
\begin{eqnarray}
\label{3.11}
|f_{n+1}(\rho)-f_{0}(\rho)|&\leq&\int_0^{\rho}\left|g(\rho,r,f_{n}(r))\right|\dd r\notag\\[1mm]
&\leq&\frac{2}{3}\int_0^{\rho}\frac{\rho^{2}}{r}\left|2r^{2k}{R^{*}}^{2}+\frac{3f_{0}^{2}(r)}{r^{2}}+\frac{3(f_{n}^{2}(r)-f_{0}^{2}(r))}{r^{2}}\right|\dd r\notag\\[1mm]
&\leq&\frac{2}{3}\int_0^{\rho}\frac{\rho^{2}}{r}\left(2r^{2k}{R^{*}}^{2}+3|a|^{2}r^{2}\right)dr+4M_{2}\int_0^{\rho}\frac{\rho^{2}}{r}|f_{n}(r)-f_{0}(r)|\dd r\notag\\[1mm]
&\leq&M_{1}\rho^{2k+2}+4M_{2}M_{1}\rho^{2k+4}\leq2M_{1}\rho^{2k+2},~~~n=1,2,\cdots
\end{eqnarray}
and
\begin{eqnarray}
\label{3.12}
|f_{n}(\rho)|\leq M_{2}\rho^{2}.
\end{eqnarray}
Hence,
\begin{eqnarray}
\label{3.13}
&&|f_{n+1}(\rho)-f_{n}(\rho)|\notag\\[1mm]
&&\leq\int_0^{\rho}\left|g(\rho,r,f_{n}(r))-g(\rho,r,f_{n-1}(r))\right|\dd r\notag\\[1mm]
&&=\frac{1}{3}\int_0^{\rho}\left(\frac{\rho^{2}}{r}-\frac{r^{2}}{\rho}\right)\left|2r^{2k}{R^{*}}^{2}+\frac{f_{n}^{2}+f_{n-1}^{2}+f_{n}f_{n-1}+3(f_{n}+f_{n-1})}{r^{2}}\right|\left|f_{n}-f_{n-1}\right|\dd r\notag\\[1mm]
&&\leq\frac{2}{3}\int_0^{\rho}\frac{\rho^{2}}{r}\left(2r^{2k}{R^{*}}^{2}+3M_{2}^{2}r^{2}+6M_{2}\right)\left|f_{n}-f_{n-1}\right|\dd r\notag\\[1mm]
&&\leq M_{3}\int_0^{\rho}\frac{\rho^{2}}{r}\left|f_{n}-f_{n-1}\right|\dd r,
~~~\rho\leq\min\bigg\{1,\sqrt{\frac{1}{4M_{2}}},\sqrt{\frac{1}{|a|}}\bigg\},
~~~n=1,2,\cdots
\end{eqnarray}
where $M_{3}=\frac{4}{3}{R^{*}}^{2}+2M_{2}^{2}+4M_{2}$. Set
\begin{eqnarray}
\label{3.14}
\delta=\min\bigg\{1,\sqrt{\frac{1}{4M_{2}}},\sqrt{\frac{1}{2M_{3}}},
\sqrt{\frac{1}{|a|}}\bigg\}
\end{eqnarray}
and iterating $\eqref{3.13}$, we have
\begin{eqnarray}
\label{3.15}
|f_{n+1}(\rho)-f_{n}(\rho)|\leq\prod_{i=1}^{n}\frac{1}{2(k+i)}
M_{1}\rho^{2k+2}(M_{3}\rho^{2})^{n},~~~\rho\in[0,\delta],~~~n=1,2,\cdots.
\end{eqnarray}
Thus,$\{f_{n}(\rho)\}$ uniformly converges in $[0,\delta]$. Hence, the limit $f(\rho)=\lim_{n\rightarrow\infty}f_{n}(\rho)$ is a solution of $\eqref{3.5}$ over $[0,\delta]$. The local uniqueness of the solution to $\eqref{3.5}$ is obvious. Finally, letting $n\rightarrow\infty$ in $\eqref{3.11}$, then \eqref{3.4} is obtained. Thus the existence and uniqueness of \eqref{3.1} near origin has been proved.

A standard theorem on continuation of solutions assures us that the solution
can be continued as a function of $\rho$ until $f$ becomes infinite. Further, applying the continuous dependence of the solution on the parameters theorem we acquire that the solution $f(\rho;a)$ depends continuously on the parameter $a$.

Step 2{\rm:} The existence of the global solution to the boundary value problem. According to the standard theorem on continuation of solutions, the solution $f(\rho;a)$ as a function of $\rho$ can be extended to its maximal existence interval $[0,R_{a})$ with $R_a$ being infinite or finite.

With the discussion of Step 1, we are interested in $a<0$. Thus, we define $\mathcal{A}=\{a|a<0\}$ and the shooting sets $\mathcal{A}_{1},\mathcal{A}_{2},\mathcal{A}_{3}$ as follows
\begin{eqnarray*}
\mathcal{A}_{1}&=&\left\{a<0:\,f'(\rho;a)\,\,\mbox{becomes positive before}\,\,f(\rho;a)\,\,\mbox{reaches}\,-1\right\},\\[1mm] \mathcal{A}_{2}&=&\left\{a<0:\,f(\rho;a)\,\,\mbox{crosses}\,-1\,\,\mbox{before}\,\,f'(\rho;a)\,\,\mbox{becomes}\,\,0\right\},\\[1mm]
\mathcal{A}_{3}&=&\left\{a<0:\,\forall\rho>0,f'(\rho;a)\leq0,-1<f(\rho;a)<0\right\},
\end{eqnarray*}
where $f^{\prime}(\rho;a)=\frac{\partial f(\rho;a)}{\partial \rho}$ with $f(\rho;a)$ being the unique solution produced in Step 1.

From the construction of sets, we observe now that
\begin{equation*}
\mathcal{A}_{1}\cup\mathcal{A}_{2}\cup\mathcal{A}_{3}=\mathcal{A},\,\,\,
\mathcal{A}_{1}\cap\mathcal{A}_{2}=\mathcal{A}_{2}\cap\mathcal{A}_{3}=\mathcal{A}_{3}\cap\mathcal{A}_{1}=\emptyset.
\end{equation*}
In fact, it is obvious that $\mathcal{A}_{1}$, $\mathcal{A}_{2}$ and $\mathcal{A}_{3}$ do not intersect. Therefore, we are now ready to prove $\mathcal{A}_{1}\cup\mathcal{A}_{2}\cup\mathcal{A}_{3}=\mathcal{A}$. If $a<0$ but $a\notin\mathcal{A}_{1}$, then we consider three cases as follows. The first case is $\forall\rho>0,f'(\rho;a)\leq0,-1<f(\rho;a)<0$, which is equivalent to $a\in\mathcal{A}_{3}$. The second case is $\forall\rho>0,f'(\rho;a)\leq0$ and $f(\rho_{0};a)\leq-1$ for some $\rho_{0}$. Furthermore, it is worth stressing that $f(\rho_{0};a)=-1$ and $f'(\rho_{0};a)=0$ can not hold simultaneously. Suppose by contradiction, according to $\eqref{3.3}$, we have $f''(\rho_0;a)=0$. Then we get $f(\rho_{0};a)\equiv-1$, which contradicts $\eqref{3.4}$. Furthermore, we obtain $a\in\mathcal{A}_{2}$. The third case is that there is a first value $\rho_0$ such that $f'(\rho_0;a)=0$ and $f(\rho_{0};a)<-1$, which means $a\in\mathcal{A}_{2}$. In summary, If $a<0$ but $a\notin\mathcal{A}_{1}$, then $a\in\mathcal{A}_{2}\cup\mathcal{A}_{3}$. Consequently, we obtain the desired assertion.

In order to show that the set $\mathcal{A}_{3}$ is nonempty. We will next prove that the sets $\mathcal{A}_{1},\mathcal{A}_{2}$ are both open and nonempty.

First, $\mathcal{A}_{1}$ contains small $|a|$. Note that $\eqref{3.4}$ is true for all $a\in(-\infty,0]$ from Step 1, and $-1<f(\rho;a)<1,\,\forall \rho\in[0,\delta]$, where the definition of $\delta$ is in $\eqref{3.14}$. It is obvious that $\delta$ is positive as $a\rightarrow0$. Then inserting $|a|=0$ into the equation $\eqref{3.5}$, we have $f(\rho;a)>0,\,f'(\rho;a)>0,\,\forall\rho\in(0,\delta)$. Hence, for given $\rho_1\in(0,\delta)$, in view of the continuous dependence of solution on the parameter $a$, there exists a small $\varepsilon>0$ such that $f(\rho_1;a)>0,\,f'(\rho_1;a)>0,\,\forall a\in(-\varepsilon,0)$. In addition, for any given $a<0$, according to $\eqref{3.5}$, we obtain $f(\rho;a)<0,\,f'(\rho;a)<0$ initially. Then, there is a $\rho_2\in(0,\rho_1)\subset(0,\delta)$ satisfying $-1<f(\rho_2;a)<0,\,f'(\rho_2;a)=0$. Hence, there is a $\rho_3\in(\rho_2,\delta)\subset(0,\delta)$ so that $f'(\rho_3;a)>0$ and $-1<f(\rho;a)<1$ for $\rho\in(0,\rho_3]$. Consequently, $(-\varepsilon,0)\subset\mathcal{A}_{1}$ is nonempty. Applying the continuous dependence of the solution on the parameter $a$, we know that $\mathcal{A}_{1}$ is open.

Second, $\mathcal{A}_{2}$ contains large $|a|$. Here we introduce a transformation $t=|a|^{\frac{1}{2}}\rho$, then $\eqref{3.5}$ becomes
\begin{eqnarray}
\label{3.16}
f(t;a)\!\!=\!\!-t^{2}\!+\!\frac{1}{3}\int_0^t(\frac{t^{2}}{\tau}\!\!-\!\!\frac{\tau^{2}}{t})\bigg[\bigg(\frac{h^{2}(\tau)}{|a|}\!\!-\!\!\frac{J^{2}(\tau)}{\tau^{2}}\bigg)(f(\tau;a)\!+\!1)\!+\!\frac{3f^{2}(\tau;a)+f^{3}(\tau;a)}{\tau^{2}}\bigg]\dd\tau.
\end{eqnarray}
The corresponding differential form is
\begin{eqnarray}
\label{3.17}
f''(t;a)\!=\!\frac{2f(t;a)}{t^{2}}\!+\!\bigg[\frac{h^{2}\big(t|a|^{-\frac{1}{2}}\big)}{|a|}\!-\!\frac{J^{2}\big(t|a|^{-\frac{1}{2}}\big)}{t^{2}}\bigg](f(t;a)\!+\!1)\!+\!\frac{f^{3}(t;a)\!+\!3f^{2}(t;a)}{t^{2}},
\end{eqnarray}
with the initial condition $f(t;a)\sim-t^{2}$, $t\rightarrow0$. From Step 1, when $|a|$ large enough, we have
\begin{eqnarray}
\label{3.18}
-1<f(\rho;a)\leq 0,~~~~~\rho\in[0,\delta],
\end{eqnarray}
where $\delta$ is defined in $\eqref{3.14}$. Then, we can get
\begin{eqnarray}
\label{3.19}
-1<f(t;a)\leq 0,~~~~~t\in[0,\delta\sqrt{|a|}].
\end{eqnarray}
Furthermore, it is worth stressing that $\rho^{-1-k}J(r),\rho^{-k}h(r)\leq R^{*}$, for $0\leq\rho\leq1$, then we obtain the estimate
\begin{eqnarray}
\label{3.20}
\bigg|\frac{h^{2}\big(t|a|^{-\frac{1}{2}}\big)}{|a|}\!-\!\frac{J^{2}\big(t|a|^{-\frac{1}{2}}\big)}{t^{2}}\bigg|
\leq2\frac{t^{2k}{R^{*}}^{2}}{|a|^{1+k}},
~~~~~t\in[0,\sqrt{|a|}].
\end{eqnarray}
Moreover, for $t\in[0,\delta\sqrt{|a|}]$, we obtain
\begin{eqnarray}
\label{3.21}
\bigg|\frac{h^{2}\big(t|a|^{-\frac{1}{2}}\big)}{|a|}\!-\!\frac{J^{2}\big(t|a|^{-\frac{1}{2}}\big)}{t^{2}}\bigg|
\leq2\frac{t^{2k}{R^{*}}^{2}}{|a|^{1+k}}
\leq2\frac{\delta^{2k}{R^{*}}^{2}}{|a|}.
\end{eqnarray}
According to
\begin{equation*}
2\frac{\delta^{2k}{R^{*}}^{2}}{|a|}\rightarrow 0
~~as~~|a|\rightarrow \infty,
\end{equation*}
we can get that $\eqref{3.17}$ is equal to
\begin{eqnarray}
\label{3.22}
\frac{d^{2}f(t)}{dt^{2}}=\frac{2f(t)}{t^{2}}
+\frac{f^{3}(t)+3f^{2}(t)}{t^{2}},
\end{eqnarray}
with $f(t)\sim-t^{2}$ as $t\rightarrow0$. Now, we observe that $f(t)$ can be extended until $f(t)$ becomes infinite. Let $s=\ln t$, then we have
\begin{eqnarray}
\label{3.23}
f''(s)-f'(s)=2f(s)+f^{3}(s)+3f^{2}(s),
\end{eqnarray}
with $f(-\infty)=f'(-\infty)=0$. Multiplying $\eqref{3.23}$ by $f'(s)$ and integrating from $-\infty$ to $s$, we obtain
\begin{eqnarray}
\label{3.24}
(f'(s))^{2}=2f^{2}(s)\left(\frac{1}{2}f(s)+1\right)^{2}+2\int_{-\infty}^s
(f'(\alpha))^{2}\dd\alpha.
\end{eqnarray}
If there holds $f'(s_{0})=0$ for some $s_{0}>-\infty$, then applying $\eqref{3.24}$ we have $f'(s)\equiv0$ for $s\in(-\infty,s_{0})$, which bring about $f(s)\equiv0$. This contradicts the fact in Step 1. Besides,  since $f'(s)<0$ initially, then we get $f'(s)<0$, $\forall s\in(-\infty,\infty)$. Hence, for given $s_{0}\in(-\infty,\infty)$, we have
\begin{eqnarray*}
f'(s)\leq-\sqrt{2\int_{-\infty}^{s_{0}}
(f'(\alpha))^{2}\dd\alpha}<0,~~~s>s_{0}.
\end{eqnarray*}
Above inequality argues that $f(s)\rightarrow-\infty$ as $s\rightarrow\infty$ and $f(s)$ must cross $-1$ at some finite point $s_{0}=lnt_{0}$, the solution of $\eqref{3.16}$ decrease monotonically crosses $-1$ at $t$ close $t_{0}$. Thus we can conclude the fact that $\mathcal{A}_{2}$ is nonempty. Obviously, using the continuous dependence of the solution on the parameter $a$, we implies that $\mathcal{A}_{2}$ is open.

According to connectedness, since $\mathcal{A}_{1}$ and $\mathcal{A}_{2}$ are two open disjoint nonempty sets, then there exists at least one $a<0$ and $a\in\mathcal{A}_{3}$ $($say $a_{0}\,)$ such that $f(\rho;a)$ satisfies conditions $f'(\rho;a_0)\leq0,\,-1<f(\rho;a_0)<0,\,\forall \rho>0$. We claim now that $f(\rho;a_0)$ is a solution of $\eqref{3.1}$ satisfying $f(0;a_0)=0,\,f(\infty;a_0)=-1$.

In what follows, we show that the solution $f(\rho;a_0)$ satisfies $f(\infty;a_0)=-1$. Since $a_0\in\mathcal{A}_{3}$, then $f(\rho;a_0)$ satisfies $f'(\rho;a_0)\leq0,\,-1<f(\rho;a_0)<0,\,\forall \rho>0$. It is clearly that $f(\rho;a_0)$ is decreasing and bounded. Therefore, we have that $\lim_{\rho\rightarrow\infty}f(\rho;a_{0})\triangleq L\geq-1$. We are now ready to prove $L=-1$. Suppose by contradiction that $L>-1$. Hence, by $\eqref{3.1}$, we obtain $\lim_{\rho\rightarrow\infty}f''(\rho;a_{0})=L+1>0$. Then, there exists a certain $R>0$ large enough so that when $\rho>R$ there holds $f'(\rho;a_0)>0$. This contradicts the fact that $f'(\rho;a_0)\leq0$ for all $\rho>0$ and concludes the proof.

In summary, we find a solution $F(\rho)=f(\rho)+1$ satisfying $\eqref{2.2}$ subject to the conditions $F(0)=1,\,F(\infty)=0$.

Step 3{\rm:} Uniqueness of solution. Note that the solution $F(\rho;a)$ for given $a_0$ is unique. If otherwise, assume that there are two different solutions $F_1(\rho)$, $F_2(\rho)$ and set $\Psi(\rho)=F_2(\rho)-F_1(\rho)$ which satisfies the boundary conditions $\Psi(0)=\Psi(\infty)=0$ and the equation
\begin{eqnarray}
\label{3.25}
\rho^{2}\Psi''(\rho)&=&\left[\rho^{2}h^{2}(\rho)-J^{2}(\rho)+F_{1}^{2}(\rho)-1+F_{2}^{2}(\rho)+F_{1}(\rho)F_{2}(\rho)\right]\Psi(\rho)\notag\\[1mm]
&\triangleq& Q(\rho)\Psi(\rho),\,\,\,0<\rho<+\infty.
\end{eqnarray}
Without loss of generality, let $\Psi(\rho)>0$ as $\rho>0$ small enough. Together with $\eqref{2.2}$, we have
\begin{eqnarray}
\label{3.26}
  r^{2}F_{i}''(\rho)&=&\left[\rho^{2}h^{2}(\rho)-J^{2}(\rho)+F_{i}^{2}(\rho)-1\right]F_{i}(\rho)\notag\\[1mm]
&\triangleq& q_{i}(\rho)F_{i}(\rho),\,\,0<\rho<+\infty,\,\,(i=1,2).
\end{eqnarray}
Since $Q(\rho)-q_{i}(\rho)>0$, applying the Sturm comparison
theorem to $\eqref{3.25}$ and $\eqref{3.26}$, we get that $F_{i}(\rho)$ have
more zero points than $\Psi(\rho)$. We observe that $F_{i}(\rho)\neq0$ as
$\rho\in[0,+\infty)$. Hence, we have $\Psi(\rho)\neq0$ for all
$\rho\in[0,+\infty)$, which contradicts $\Psi(0)=0$ and concludes the proof.

Step 4{\rm:} The other properties of solution.

We argue that, if $a_{0}\in\mathcal{A}_{3}$, there is a constant $N>0$ such that $|a_{0}|^{1+k}\leq N{R^{*}}^{2}$. Then, for $\rho\leq1$, the solution $F(\rho;a_0)$ of $\eqref{2.2}$ satisfies
\begin{eqnarray}
\label{3.27}
|\rho^{-2}f(\rho)|\leq R^{**},
\end{eqnarray}
where $R^{**}>0$ is a constant depending on $R^{*}$ and $N$, $f(\rho)=F(\rho)-1$. Besides, we have $F'(0)=0$. In fact, we have that the solution $f(t;a)$ of $\eqref{3.17}$ crosses $-1$ if the term $\frac{h^{2}(t|a|^{-\frac{1}{2}})}{|a|}-\frac{J^{2}(t|a|^{-\frac{1}{2}})}{t^{2}}$ is enough small by Step 2. According to $\eqref{3.20}$ and the definition $t_{0}$ in Step 2, there is a $\varepsilon>0$ satisfying $2\frac{t_{0}^{2k}{R^{*}}^{2}}{|a|^{1+k}}<\varepsilon$ such that $f(t;a)$ crosses $-1$ near $t_{0}$. Letting $N=2t_0^{2k}/\varepsilon$, thus if $|a|^{1+k}>N{R^{*}}^{2}$, we obtain $a\in\mathcal{A}_{2}$. But $a_{0}\notin\mathcal{A}_{2}$, thus $|a_{0}|^{1+k}\leq N{R^{*}}^{2}$. Using $\eqref{3.5}$, $|a_{0}|^{1+k}\leq N{R^{*}}^{2}$, together with the assumption on $J(\rho)$ and $h(\rho)$, we arrive at
\begin{eqnarray}
\label{3.28}
\left|\frac{f(\rho;a_0)}{\rho^{2}}\right|
&\leq&|a_{0}|+\frac{2}{3}\int_0^\rho\frac{1}{r}\bigg|\bigg(h^{2}(r)-\frac{J^{2}(r)}{r^{2}}\bigg)(f(r;a_0)+1)+\frac{3f^{2}(r;a_0)+f^{3}(r;a_0)}{r^{2}}\bigg|\dd r\notag\\[1mm]
&\leq&|a_{0}|+\frac{2}{3}\int_0^{\rho}\frac{1}{r}\left|2r^{2k}{R^{*}}^{2}+3|a_0|^{2}r^{2}\right|\dd r\notag\\[2mm]
&\leq&(N{R^{*}}^2)^{\frac{1}{k+1}}+(N{R^{*}}^2)^{\frac{2}{k+1}}+\frac{2}{3k}{R^{*}}^{2}\triangleq R^{**},\,\,\,\,\,\forall \rho\leq1.
\end{eqnarray}
Hence $\eqref{3.27}$ follows. It is obvious that $\rho^{-2}f(\rho)$ increasing in $\rho$. Then applying the above result, we have $f(\rho)=O(\rho^{2})(\rho\rightarrow0)$. By the definition of the derivative of the function $f(\rho)$ at $\rho=0$, we get
\begin{equation*}
f'(0)=\lim\limits_{\rho\rightarrow0}\frac{f(\rho)-f(0)}{\rho}=\lim\limits_{\rho\rightarrow0}\frac{O(\rho^{2})}{\rho}=0.
\end{equation*}
Similarly, $F'(0)=0$.

We observe now that, there holds the asymptotic decay estimates $F(\rho)=O(\e^{-(1-\varepsilon)\rho})$ as $\rho\to\infty$, where $0<\varepsilon<1$ can be taken to be arbitrarily small. Define the comparison function $\eta(\rho)=Ce^{-(1-\varepsilon)\rho}$, where $0<\varepsilon<1$, $C>0$ is a constant to be chosen. Applying $\eqref{2.2}$ and $h(\rho)>0$, we have that $\forall\varepsilon>0$, there exists a $\rho_{\varepsilon}>0$ large enough such that
\begin{eqnarray}
\label{3.29}
\left(F-\eta\right)''
&=&h^{2}F+\frac{1}{\rho^{2}}F\left(F^{2}-1-J^{2}\right)-\left(1-\varepsilon\right)^{2}\eta\notag\\[1mm]
&=&\left(1-\varepsilon\right)^{2}\left(F-\eta\right)+\left[1-\left(1-\varepsilon\right)^{2}\right]F+\frac{1}{\rho^{2}}F\left(F^{2}-1-J^{2}\right)\notag\\[1mm]
&\geqslant&\left(1-\varepsilon\right)^{2}\left(F-\eta\right),\,\rho>\rho_{\varepsilon}.
\end{eqnarray}
Let $C>0$ be sufficiently large to make $(F-\eta)(\rho_{\varepsilon})\leq 0$. Then, according to the boundary condition $(F-\eta)(\rho)\rightarrow0\left(\rho\rightarrow\infty\right)$, we obtain $0<F\leq\eta=Ce^{-(1-\varepsilon)\rho},\,\rho>\rho_{\varepsilon}$ in view of the maximum principle and conclude the proof.

To sum up, the conclusion follows exactly as in Lemma \ref{le3.1}.
\end{proof}

\subsection{The equation governing the $h$--component}

In this subsection, we pay our attention to the equation that governs the $h$--component of a solution to the Julia--Zee dyon equations.

\begin{lemma}\label{le4.1}
Given $J(\rho)$ and $h(\rho)$ as in Lemma \ref{le3.1}, and the associated function $F(\rho)$, we can find a unique continuously differentiable solution $\tilde{h}(\rho)$ satisfying
\be\label{4.1}
\tilde{h}''\rho^2+2\tilde{h}'\rho=2F^2\tilde{h}+\frac{\beta^2}{2}\tilde{h}(\tilde{h}^2-1)\rho^2,~~~\rho>0,
\ee
subject to the condition
\be\label{4.2}
\tilde{h}(0)=0,~~~\tilde{h}(\infty)=1,
\ee
along with $\tilde{h}(\rho)$ increasing and $\rho^{-1}\tilde{h}(\rho)$ decreasing in $\rho$ and bounded as $\rho\to 0$. Besides, $\rho^{-1}\tilde{h}(\rho)\leq M$ for $\rho<1$, where $M=M(R^{**},\beta,N)>0$ and $N$ is a positive constant. Furthermore, we have the sharp asymptotic estimates
\bea
\tilde{h}(\rho)&=&b\rho+O(\rho^2),~~~0<b<N(2R^{**}+\frac{\beta}{2}),~~~\rho\to 0,\label{22}\\
\tilde{h}(\rho)&=&1+O(\e^{-\beta(1-\varepsilon)\rho}),~~~\rho\to\infty,~~~\varepsilon>0,~~~\beta>0.\label{33}
\eea
\end{lemma}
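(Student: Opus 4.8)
The plan is to follow the same four-step architecture as Lemma \ref{le3.1}, treating $F$ as the fixed function manufactured there (so that all of its bounds, in particular $\rho^{-2}(1-F)\le R^{**}$ for $\rho<1$ and the exponential decay at infinity, are available) and using $b>0$ as the shooting parameter in place of $a$. The first move is to expose the singular part of the linear operator by rewriting \eqref{4.1} as
\[
\tilde h''+\frac{2}{\rho}\tilde h'-\frac{2}{\rho^2}\tilde h=\Big[\frac{2(F^2-1)}{\rho^2}+\frac{\beta^2}{2}(\tilde h^2-1)\Big]\tilde h .
\]
The homogeneous operator on the left is an Euler operator with indicial roots of $m^2+m-2=0$, i.e. $m=1,-2$, so its independent solutions are $\rho$ and $\rho^{-2}$ and the regular branch behaves like $b\rho$; since $\rho^{-2}(1-F)$ is bounded near $0$ the bracket on the right stays bounded there. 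Using $\rho,\rho^{-2}$ and their Wronskian I would convert \eqref{4.1} into the integral equation
\[
\tilde h(\rho;b)=b\rho+\frac{1}{3}\int_0^\rho\Big(\rho-\frac{r^3}{\rho^2}\Big)\Big[\frac{2(F^2(r)-1)}{r^2}+\frac{\beta^2}{2}(\tilde h^2(r)-1)\Big]\tilde h(r)\,\dd r,
\]
and run the Picard scheme with $\tilde h_0=b\rho$ exactly as in Step 1 of Lemma \ref{le3.1}; the geometric estimates there transfer, yielding a unique local solution with $\tilde h(\rho;b)=b\rho+O(\rho^2)$ and continuous dependence on $b$.

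For the global step I would introduce the shooting sets $\mathcal{B}_1=\{b>0:\tilde h'(\rho;b)\text{ vanishes before }\tilde h\text{ reaches }1\}$, $\mathcal{B}_2=\{b>0:\tilde h(\rho;b)\text{ crosses }1\text{ before }\tilde h'\text{ becomes }0\}$, and $\mathcal{B}_3=\{b>0:0<\tilde h<1,\ \tilde h'\ge0\ \forall\rho>0\}$, which partition $(0,\infty)$ by the same trichotomy used for $\mathcal{A}_1,\mathcal{A}_2,\mathcal{A}_3$. I expect $\mathcal{B}_2$ to contain all large $b$: after the rescaling $t=b\rho$ the $\beta^2$-term drops out as $b\to\infty$ while $F\to1$, leaving the Euler equation $\ddot{\tilde h}+\tfrac2t\dot{\tilde h}-\tfrac{2}{t^2}\tilde h=0$ with linear branch $\tilde h\sim t$, which crosses $1$ near $t\approx1$, i.e. $\rho\approx1/b$, before the bounded $\beta^2$-term can turn it around. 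Conversely $\mathcal{B}_1$ should contain all small $b$: once $F$ has decayed and $\tilde h$ remains small the equation reduces to $\tilde h''+\tfrac2\rho\tilde h'+\tfrac{\beta^2}{2}\tilde h\approx0$, whose substitution $\tilde h=u/\rho$ gives $u''+\tfrac{\beta^2}{2}u=0$, forcing oscillation and hence a sign change of $\tilde h'$. Openness of $\mathcal{B}_1,\mathcal{B}_2$ follows from continuous dependence on $b$, and connectedness yields $\mathcal{B}_3\neq\emptyset$; for $b_0\in\mathcal{B}_3$ the solution is monotone and bounded, so $\tilde h\to L\le1$, and the usual contradiction (if $L<1$ then $\tilde h''$ keeps a definite negative sign at infinity) forces $L=1$.

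Uniqueness I would obtain by the Sturm comparison of Step 3: for two solutions $\tilde h_1,\tilde h_2$ the difference $\Phi=\tilde h_2-\tilde h_1$ satisfies $\rho^2\Phi''+2\rho\Phi'=\tilde Q\Phi$ while each $\tilde h_i$ satisfies $\rho^2\tilde h_i''+2\rho\tilde h_i'=\tilde q_i\tilde h_i$, and the comparison is clean here because $\tilde Q-\tilde q_1=\tfrac{\beta^2}{2}\rho^2\tilde h_2(\tilde h_1+\tilde h_2)>0$ by strict positivity of both profiles on $(0,\infty)$; since $\tilde h_1$ has no interior zero, $\Phi$ cannot vanish at both $0$ and $\infty$ unless $\Phi\equiv0$. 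The remaining assertions are quantitative refinements: tracking the overshoot threshold in the rescaled analysis bounds the admissible slope by $0<b<N(2R^{**}+\beta/2)$; the integral equation shows $\rho^{-1}\tilde h$ is decreasing and $\le M$ for $\rho<1$; and the decay $\tilde h=1+O(\e^{-\beta(1-\varepsilon)\rho})$ follows by comparing $1-\tilde h$ with $C\e^{-\beta(1-\varepsilon)\rho}$ via the maximum principle, exactly as $F$ was controlled by $C\e^{-(1-\varepsilon)\rho}$. The main obstacle is the global Step 2, and within it the nonemptiness of the two extreme sets: because the cubic term $\tfrac{\beta^2}{2}\tilde h(\tilde h^2-1)$ changes sign across $\tilde h=1$, deciding overshoot versus turnaround cannot be read off a single comparison and instead requires the two rescaling limits together with a careful energy/monotonicity analysis of the reduced equations.
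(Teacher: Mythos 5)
Your proposal is correct in substance and follows the same four-step architecture as the paper's proof: Picard iteration off the Euler operator with indicial roots $1,-2$ (your integral equation is exactly the paper's \eqref{4.4} divided by $\rho$, since the paper works with $H=\rho\tilde h$ and the kernel $\frac{\rho^2}{r}-\frac{r^2}{\rho}$), a three-set shooting argument closed by connectedness, a Sturm comparison for uniqueness, and a comparison function $C\e^{-\beta(1-\varepsilon)\rho}$ for the decay. The genuine differences are in Step 2. Your shooting sets are phrased in terms of $\tilde h$ crossing $1$ versus $\tilde h'$ vanishing, whereas the paper's are phrased in terms of $H=\rho\tilde h$ becoming infinite versus $H'$ turning negative; both trichotomies work (note $H'<0$ forces $\tilde h'<0$ since $\tilde h>0$), but yours requires the extra remark that $\tilde h=1$ and $\tilde h'=0$ cannot occur simultaneously (they cannot: at such a point $\tilde h''=2F^2/\rho^2>0$, making it a strict minimum). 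For large $b$ you rescale $t=b\rho$ and perturb off the linear Euler branch $\tilde h\sim t$ to force a crossing of $1$ near $\rho\approx 1/b$, which is arguably simpler and yields the slope threshold $b_0\leq N(2R^{**}+\beta/2)$ more directly; the paper instead rescales $t=\sqrt{b}\,\rho$ and proves finite-time blow-up of $H$ from $H''>\frac{\beta^2}{2t^2}H^3$ via an energy inequality, which is what its definition of $\mathcal{B}_2$ demands. Two places where you are sketchier than the paper and should supply detail: (i) for small $b$ you must first guarantee that $\tilde h(\cdot;b)$ stays small on a fixed, sufficiently long interval before invoking oscillation of $u''+\frac{\beta^2}{2}u\approx 0$; the paper gets this from continuous dependence on $b$ together with $\tilde h(\cdot;0)\equiv 0$; (ii) your Sturm comparison places the two ``zeros'' of $\Phi$ at the endpoints $0$ and $\infty$, a degenerate configuration that the paper resolves by showing the Wronskian $(\rho\Psi)'(\rho\tilde h_1)-(\rho\tilde h_1)'(\rho\Psi)$ vanishes at $\rho=0$ and is increasing, hence positive, and then contradicting $\Psi/\tilde h_1\to 0$ at infinity. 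Neither point is an obstruction, so the plan goes through.
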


\begin{proof}
Step 1{\rm:} The existence and uniqueness of the local solution to the initial value problem. Let $H(\rho)=\rho\tilde{h}(\rho)$, then since $H(0)=0$, we can transform $\eqref{4.1}$ into the form
\begin{eqnarray}
\label{4.3}
H''=\frac{2F^{2}H}{\rho^{2}}+\frac{\beta^{2}}{2}H({\tilde{h}}^{2}-1)=\frac{2F^{2}H}{\rho^{2}}+\frac{\beta^{2}}{2}H(\frac{H^{2}}{\rho^{2}}-1),~~\rho>0.
\end{eqnarray}
Notice that the two linearly independent solutions of equation $H''-\frac{2H}{\rho^{2}}=0$ are $H=\rho^{2},\rho^{-1}$. Then we can convert $\eqref{4.3}$, for $\rho>0$ small enough, at least formally, into the integral equation
\begin{eqnarray}
\label{4.4}
H(\rho;b)\!=\!b\rho^{2}\!+\!\dfrac{1}{3}\int_0^\rho\left(\frac{\rho^{2}}{r}\!-\!\frac{r^{2}}{\rho}\right)\bigg[\frac{2}{r^{2}}\left(F^{2}(r)\!-\!1\right)\!+\!\frac{\beta^{2}}{2}\left(\frac{H^{2}(r;b)}{r^{2}}\!-\!1\right)\bigg]H(r;b)\dd r,
\end{eqnarray}
which can be easily solved by Picard iteration for an arbitrary constant $b$, at least near $\rho=0$. Analogous to the Step 1 in Lemma \ref{le3.1}, it follows that there exists a locally continuous solution $H(\rho)=b\rho^{2}+O(\rho^{3})\,(\rho\rightarrow0)$ of the initial value problem consisting of $\eqref{4.3}$ and $H(0)=0$. A standard theorem on continuation of solutions assures us that the solution can be continued as a function of $\rho$ until $H$ becomes infinite. Further, applying the continuous dependence of the solution on the parameters theorem we acquire that the solution $H(\rho;b)$ depends continuously on the parameter $b$.

Step 2{\rm:} The existence of the global solution to the boundary value problem. Applying the standard theorem on continuation of solutions, the solution $H(\rho;b)$ as a function of $\rho$ can be extended to its maximal existence interval $[0,R_{b})$ with $R_b$ being infinite or finite.

With the discussion of Step 1, we are interested in $b>0$ and now we define $\mathcal{B}=\{b|b>0\}$ and the shooting sets $\mathcal{B}_{1},\mathcal{B}_{2},\mathcal{B}_{3}$ as follows
\begin{eqnarray*}
\mathcal{B}_{1}&=&\left\{b>0:\,\exists \rho>0\,\,\mbox{such that}\,\,H'(\rho;b)<0\,\,\mbox{before}\,\,\tilde{h}(\rho;b)\,\,\mbox{becomes infinite}\right\},\\
\mathcal{B}_{2}&=&\left\{b>0:\,H(\rho;b)\,\,\mbox{becomes infinite before}\,\,H'(\rho;b)\,\,\mbox{becomes zero}\right\},\\
\mathcal{B}_{3}&=&\left\{b>0:\,\tilde{h}(\rho;b)\,\,\mbox{is bounded and}\,\,H'(\rho;b)\geq0\,\,\mbox{for all}\,\rho>0\right\}.
\end{eqnarray*}
It is obvious that $\mathcal{B}_{1}\cup\mathcal{B}_{2}\cup\mathcal{B}_{3}=\mathcal{B},\,\mathcal{B}_{1}\cap\mathcal{B}_{2}=\mathcal{B}_{2}\cap\mathcal{B}_{3}=\mathcal{B}_{3}\cap\mathcal{B}_{1}=\emptyset$.

For the purpose of proving that the set $\mathcal{B}_{3}$ is nonempty, we can first demonstrate that the sets $\mathcal{B}_{1},\mathcal{B}_{2}$ are both open and nonempty. We start by showing that $\mathcal{B}_{1}$ contains small $b$. In view of $\tilde{h}(\rho;0)\equiv0$ in $\eqref{4.4}$, then for any bounded interval $[\rho_1,\rho_2]\subset(0,\infty)$, there exists a $\delta>0$ so that
\be\label{4.5}
0<\tilde{h}(\rho;b)<\frac{1}{\sqrt{2}},~~b\in(0,\delta),~~\rho\in[\rho_1,\rho_2].
\ee
Then, by straightforward calculations we get
\begin{equation*}
-\frac{\beta^{2}}{2}<\frac{2F^{2}(\rho)}{\rho^{2}}+\frac{\beta^{2}}{2}(\tilde{h}(\rho;b)-1)\leq-\frac{\beta^{2}}{8},~~\rho\in[\rho_1,\rho_2],~~\rho\geq\frac{4}{\beta}.
\end{equation*}
From $\eqref{4.3}$, we have
\begin{eqnarray}
\label{4.6}
H''=\bigg[\frac{2F^{2}}{\rho^{2}}+\frac{\beta^{2}}{2}(\tilde{h}^{2}-1)\bigg]H\leq-\frac{\beta^{2}}{8}H,~~\rho\in[\rho_1,\rho_2],~~\rho\geq\frac{4}{\beta}.
\end{eqnarray}
Noting that equation $\eqref{4.6}$ is an oscillatory equation in the above bounded range of $\rho$ and distance between neighboring zeros does not exceed $\frac{2\sqrt{2}\pi}{\beta}$. Without prejudice to generality, we can assume $\rho_1$ and $\rho_2$ are adjacent zeros in $\eqref{4.5}$ satisfying $\rho_1,\,\rho_2\geq\frac{4}{\beta}$, then there is an $\rho_0\in(\rho_1,\rho_2)$ so that $H'(\rho_0;b)=0$. In fact, from $\eqref{4.6}$, whether for any $\rho\in(\rho_0,\rho_2)$ or $\rho\in(\rho_1,\rho_0)$, we can all obtain
\begin{eqnarray}
\label{4.7}
|H'(\rho;b)|=\bigg|\int_{\rho_{i}}^\rho\bigg(\frac{2F^{2}(r)}{r^{2}}+\frac{\beta^{2}}{2}(\tilde{h}^{2}(r;b)-1)\bigg)H(r;b)\dd r\bigg|\leq\frac{\pi^{2}}{2\sqrt{2}},~~i=0,1.
\end{eqnarray}
According to mean value theorem, together with $\eqref{4.7}$, whence we have
\begin{eqnarray*}
|H(\rho;b)|=|H'(\zeta;b)(\rho-\rho_1)|\leq\frac{\pi^{3}}{\beta},~~\rho\in(\rho_1,\rho_2),~~\rho_1<\zeta<\rho.
\end{eqnarray*}
In summary, for a sufficiently small $b\in(0,\delta)$, it can be concluded that there exists a $\rho>0$ such that $H'(\rho)<0$ before $\tilde{h}(\rho;b)$ becomes infinite. Consequently $\mathcal{B}_{1}$ is nonempty.

In order to proof $\mathcal{B}_{2}$ is nonempty, we use the variable $t$ to replace $\rho:\rho=b^{-\frac{1}{2}}t$. Thus \eqref{4.3} becomes
\begin{eqnarray}
\label{4.8}
H''(t)=\frac{2}{t^{2}}F^{2}(b^{-\frac{1}{2}}t)H(t)+\frac{\beta^{2}}{2t^{2}}H^{3}(t)-\frac{\beta^{2}}{2b}H(t).
\end{eqnarray}
Analogously to the second step in Lemma \ref{le3.1}, if $b\rightarrow\infty$ in above equation, together with \eqref{3.27}, then for $(2R^{**}+\frac{\beta}{2})b^{-1}\leq\frac{1}{N}$, we obtain
\begin{equation*}
\frac{2}{t^{2}}F^{2}(b^{-\frac{1}{2}}t)H(t)-\frac{\beta^{2}}{2b}H(t)\geq\frac{2}{t^{2}}H(t)-(2R^{**}+\frac{\beta^{2}}{2})b^{-1}H(t)>0
\end{equation*}
and \eqref{4.8} can be rewritten as $H''(t)>\frac{\beta^{2}}{2t^{2}}H^{3}(t)$. Hence, for all $t\in(0,1)$, we get $H(t)>0,\,H'(t)>0,\,H''(t)>0$. On the other hand, letting $t=\mathrm{e}^{s}$, then we have $H''(s)-H'(s)>\frac{\beta^{2}}{2}H^{3}(s)$. Multiplying above inequality by $H'(s)$ and integrating from $-\infty$ to $s$, then we get
\begin{eqnarray*}
\frac{1}{2}({H'(s)})^{2}>\frac{\beta^{2}}{8}H^{4}(s)+\int_{-\infty}^s ({H'(s)})^{2}\dd\tau>\frac{\beta^{2}}{8}H^{4}(s).
\end{eqnarray*}
Moreover, we have ${H'(s)}>\frac{\beta}{2}H^{2}(s),~~\forall s>-\infty$. Then, since
\begin{eqnarray*}
-\frac{1}{H(\tau)}\bigg|_{s_1}^s=\int_{s_1}^s\frac{H'(\tau)}{{H^{2}(\tau)}}\dd\tau>\frac{\beta}{2}(s-s_1),~~s>s_1>-\infty,
\end{eqnarray*}
accordingly, $H(s)$ blows up at some finite point $s_1=\ln t_1$. Indeed we have shown that $H(\rho;b)$ becomes infinite before $H'(\rho;b)$ becomes zero. Therefore $\mathcal{B}_{2}$ is nonempty. Continuity of the parameter $b$ can ensure that two sets are open.

According to connectedness, since $\mathcal{B}_{1}$ and $\mathcal{B}_{2}$ are two open disjoint nonempty sets, then there must be some value of $b$ in neither $\mathcal{B}_{1}$ nor $\mathcal{B}_{2}$ such that $\tilde{h}(\rho;b)$ satisfies conditions $\tilde{h}(\rho;b)$ is bounded and $H'(\rho;b)\geq0$ for all $\rho>0$. In other words, if $b\in\mathcal{B}_{2}$ $($say $b_{0}\,)$, we can get
\begin{equation}
\label{4.9}
b_{0}\leq N\left(2R^{**}+\frac{\beta}{2}\right).
\end{equation}
We observe now that $\tilde{h}(\rho;b_0)$ is a solution of $\eqref{4.1}$ satisfying $\tilde{h}(0;b_0)=0,\,\tilde{h}(\infty;b_0)=1$.

Since $b_0\in\mathcal{B}_{3}$, then the solution $\tilde{h}(\rho;b_{0})$ is bounded at $[0,+\infty)$. We next claim that the solution $\tilde{h}(\rho;b_0)$ satisfies $\tilde{h}(\infty;b_0)=1$. To this end, we first prove that $\tilde{h}(\rho;b_0)\leq1,\,\forall\rho>0$. If otherwise $\tilde{h}(\rho;b_0)>1,\,\exists\rho>0$. Then we argue that $\lim_{\rho\rightarrow\infty}\tilde{h}'(\rho;b_0)=0$. In fact, setting $\lim_{\rho\rightarrow\infty}\tilde{h}'(\rho;b_0)=\alpha>0$, then there exists a $\rho>0$ large enough such that $\tilde{h}'(\rho;b_0)>\frac{\alpha}{2}>0$. Furthermore, $\tilde{h}(\rho;b_0)>\frac{\alpha}{2}\rho+C$, which contradicts the finiteness of $\tilde{h}(\rho;b_0)$. Replacing the above result with the equation \eqref{4.1}, we observe that $\tilde{h}''(\rho;b_0)>0$ as $\rho$ sufficiently large, which contradicts $\lim_{\rho\rightarrow\infty}\tilde{h}'(\rho;b_0)=0$. We show now that for any $\varepsilon>0$, there is a $R_{\varepsilon}>0$ so that $\tilde{h}(\rho;b_0)>1-\varepsilon$ as $\rho>R_{\varepsilon}$. Suppose by contradiction that $\tilde{h}(\rho;b_0)\leq1-\varepsilon$ for all $\rho>\sqrt{\frac{8}{\beta^{2}\varepsilon}}$. Substituting the above hypothesis into the equation \eqref{4.3} yields
\begin{equation}
\label{4.10}
H''(\rho;b_0)\leq\bigg(\frac{2}{\rho^{2}}-\frac{\beta^{2}}{2}\varepsilon\bigg)H(\rho;b_0)\leq-\frac{\beta^{2}}{4}\varepsilon H(\rho;b_0),~~\rho>\sqrt{\frac{8}{\beta^{2}\varepsilon}}.
\end{equation}
Then $H(\rho;b_0)$ is an oscillating function as $\rho\in\big[\sqrt{\frac{8}{\beta^{2}\varepsilon}},\infty\big)$, which contradicts $b_0\in\mathcal{B}_{3}$. In other words, there exists a $\rho_0>\sqrt{\frac{8}{\beta^{2}\varepsilon}}$ such that $\tilde{h}(\rho_0;b_0)>1-\varepsilon$. Moreover, we stress that there is no minimum of the solution $\tilde{h}(\rho;b_0)$ less than $1-\varepsilon$ for all $\rho>\rho_0$. In reality, if not, there is a $\rho_1>\rho_0$ so that $\tilde{h}(\rho_1;b_0)\leq1-\varepsilon$ and $\tilde{h}'(\rho_1;b_0)=0$. Clearly, $\tilde{h}''(\rho;b_0)<0$ in \eqref{4.1}, then $\tilde{h}'(\rho;b_0)\leq0$ for all $\rho>\rho_1$. Furthermore, $\tilde{h}(\rho;b_0)\leq1-\varepsilon$ for all $\rho>\rho_1$. Similarly, this contradicts the fact that $b_0\in\mathcal{B}_{3}$. With the above analysis, we must have $1-\varepsilon<\tilde{h}(\rho;b_{0})\leq1$ as $\rho\rightarrow{+\infty}$. In conclusion, we obtain $\lim_{\rho\rightarrow\infty}\tilde{h}(\rho;b_{0})=1$.

In summary, we find a solution $\tilde{h}(\rho)$ satisfying $\eqref{4.1}$ subject to the conditions $\tilde{h}(0)=0,\,\tilde{h}(\infty)=1$.

Step 3{\rm:} Uniqueness of solution. We claim that the solution $\tilde{h}(\rho)$ for above given parameter $b_{0}$ is unique. Suppose otherwise that there are two different solutions $\tilde{h}_{1}(\rho),\tilde{h}_{2}(\rho)$. Then setting $\Psi(\rho)=\tilde{h}_{2}(\rho)-\tilde{h}_{1}(\rho)$ which satisfies the boundary condition $\Psi(0)=\Psi(\infty)=0$ and the equation
\begin{eqnarray}
\label{4.11}
(\rho\Psi(\rho))''&=&\left[\frac{\beta^{2}}{2}\left(\tilde{h}_{2}^{2}(\rho)+\tilde{h}_{1}^{2}(\rho)+\tilde{h}_{1}(\rho)\tilde{h}_{2}(\rho)-1\right)+\frac{2}{\rho^{2}}F^{2}(\rho)\right](\rho\Psi(\rho))\notag\\[1mm]
&\triangleq&P(\rho)(\rho\Psi(\rho)),\,\,\,0<\rho<+\infty.
\end{eqnarray}
Without loss of generality, let $\Psi(\rho)>0$ when $\rho>0$ is sufficiently small. In view of $\eqref{4.1}$, we get
\begin{eqnarray}
\label{4.12}
(\rho\tilde{h}_{i}(\rho))''&=&\left[\frac{\beta^{2}}{2}\left(\tilde{h}_{i}^{2}(\rho)-1\right)+\frac{2}{\rho^{2}}F^{2}(\rho)\right](\rho\tilde{h}_{i}(\rho))\notag\\[1mm]
&\triangleq&p_{i}(\rho)(\rho\tilde{h}_{i}(\rho)),\,\,0<\rho<+\infty,\,\,(i=1,2).
\end{eqnarray}
Since $P(\rho)-p_{i}(\rho)>0$, applying the Sturm comparison theorem to $\eqref{4.11}$ and $\eqref{4.12}$, we have that $\rho\tilde{h}_{i}(\rho)$ have more zero points than $\rho\Psi(\rho)$. Noting that $\rho\tilde{h}_{i}(\rho)\neq0$ for all $\rho\in(0,+\infty)$, then we have $\Psi(\rho)\neq0$ at a finite internal of $\rho$. Multiplying the equation $\eqref{4.11}$ by $\rho\tilde{h}_{1}(\rho)$, and the equation $\eqref{4.12}$ $($take $i=1)$ by $\rho\Psi(\rho)$, then subtracting, we have
\begin{equation*}
(\rho\Psi(\rho))''(\rho\tilde{h}_{1}(\rho))-(\rho\tilde{h}_{1}(\rho))''(\rho\Psi(\rho))=\frac{\beta^{2}}{2}(\tilde{h}_{2}^{2}(\rho)+\tilde{h}_{1}(\rho)\tilde{h}_{2}(\rho))(\rho\Psi(\rho))(\rho\tilde{h}_{1}(\rho))>0.
\end{equation*}
Furthermore $[(\rho\Psi(\rho))'(\rho\tilde{h}_{1}(\rho))-(\rho\tilde{h}_{1}(\rho))'(\rho\Psi(\rho))]'>0$. In other words, $(\rho\Psi(\rho))'(\rho\tilde{h}_{1}(\rho))-(\rho\tilde{h}_{1}(\rho))'(\rho\Psi(\rho))$ is monotonically increasing. Applying
\begin{eqnarray*}
\left[(\rho\Psi(\rho))'(\rho\tilde{h}_{1}(\rho))-(\rho\tilde{h}_{1}(\rho))'(\rho\Psi(\rho))\right]\bigg|_{\rho=0}=\left[\rho^{2}\left(\Psi'(\rho)\tilde{h}_{1}(\rho)-\Psi(\rho)\tilde{h}_{1}'(\rho)\right)\right]\bigg|_{\rho=0}=0,
\end{eqnarray*}
we get $(\rho\Psi(\rho))'(\rho\tilde{h}_{1}(\rho))-(\rho\tilde{h}_{1}(\rho))'(\rho\Psi(\rho))>0$.
According to
\begin{equation*}
\left[\frac{\Psi(\rho)}{\tilde{h}_{1}(\rho)}\right]'
=\frac{(\Psi'(\rho)\tilde{h}_{1}(\rho)-\tilde{h}_{1}'(\rho)\Psi(\rho))}{\tilde{h}_{1}^{2}(\rho)}>0
\end{equation*}
and $(\tilde{h}_{1}^{-1}(\rho))\Psi(\rho)>0$ at $\rho=\varepsilon>0$, we easily obtain $(\tilde{h}_{1}^{-1}(\rho))\Psi(\rho)>0$ as $\rho\rightarrow\infty$, which contradicts
\begin{eqnarray*}
\lim\limits_{\rho\rightarrow\infty}\left(\frac{\Psi(\rho)}{\tilde{h}_{1}(\rho)}\right)=\lim\limits_{\rho\rightarrow\infty}\left(\frac{\tilde{h}_{2}(\rho)-\tilde{h}_{1}(\rho)}{\tilde{h}_{1}(\rho)}\right)=\lim\limits_{\rho\rightarrow\infty}\left(\frac{\tilde{h}_{2}(\rho)}{\tilde{h}_{1}(\rho)}-1\right)=0.
\end{eqnarray*}
Thus $\Psi\equiv0$, that is, $\tilde{h}_{1}(\rho)=\tilde{h}_{2}(\rho)$.

Step 4{\rm:} The other properties of solution. We now derive some further properties about the solution $\tilde{h}(\rho;b)$.

First, we claim that $\tilde{h}'(\rho)\geq0$ for all $\rho>0$. If otherwise, since $\tilde{h}'(\rho)>0$ initially, then there exists a maximum point $\rho_1$ such that $\tilde{h}'(\rho_1)=0$ and $\tilde{h}''(\rho_1)\leq0$. According to $\tilde{h}'(\rho)>0$ for $\rho$ sufficiently large, thus there exists a minimum point $\rho_2=\inf\{\,\rho\,|\,\tilde{h}'(\rho)=0,\, \rho>\rho_1\}$ so that $\tilde{h}'(\rho_2)=0$ and $\tilde{h}''(\rho_2)\geq0$. This contradicts the fact that $\tilde{h}''(\rho_2)\geq0$ can not happen in view of $\frac{2F^{2}(\rho)}{\rho^{2}}$ and $\tilde{h}^{2}(\rho)$ are decreasing in $[\rho_1,\rho_2]$. Therefore $\tilde{h}(\rho)$ is increasing for all $\rho>0$. Next using $\eqref{4.4}$ and
$H(\rho)>0,\,0\leq h(\rho),F(\rho)\leq1$ for
all $\rho>0$, we can get that $\rho^{-2}H(\rho)$ is decreasing as $\rho>0$, that is $\rho^{-1}\tilde{h}(\rho)$ is decreasing as $\rho>0$

In the following, we prove that if $b_{0}\in\mathcal{B}_{3}$, for $\rho\leq1$, we can find a suitably large constant $M$ such that $|\rho^{-1}\tilde{h}(\rho)|\leq M$. In order to justify this assertion, for all
$\rho\leq1$, applying $\eqref{3.27}$, $\eqref{4.4}$ and $\eqref{4.9}$, we arrive at
\begin{eqnarray}
\label{4.13}
|\tilde{h}(\rho)|
&\leq&b_0\rho+\frac{2}{3}\int_0^\rho\frac{\rho}{r}
\left|\frac{2}{r^{2}}(F^{2}(r)-1)+\frac{\beta^{2}}{2}\left(\frac{H^{2}(r)}{r^{2}}-1\right)\right|H(r)\dd r\notag\\[2mm]
&\leq&b_0\rho+\frac{2}{3}\int_0^\rho\frac{\rho}{r}
\left(2R^{**}+2\beta^{2}b_{0}^{2}\right)b_0r^2\dd r
=b_0\rho+\frac{2}{3}b_0\rho^3
\left(R^{**}+\beta^{2}b_{0}^{2}\right)\notag\\[2mm]
&\leq&\left\{N\left(2R^{**}+\frac{\beta}{2}\right)\left[1+\frac{2}{3}\left(R^{**}+N^{2}\beta^{2}\left(2R^{**}+\frac{\beta}{2}\right)^{2}\right)\right]\right\}\rho\triangleq M\rho.
\end{eqnarray}

Finally, we show that there holds the asymptotic decay estimates $\tilde{h}(\rho)=O(\e^{-\beta(1-\varepsilon)\rho})$ as $\rho\to\infty$, where $0<\varepsilon<1$ can be taken to be arbitrarily small. Setting $\omega(\rho)=\rho(\tilde{h}-1)$, then $\eqref{4.1}$ can be written as $\omega''=\beta^{2}\omega$ when $\rho\rightarrow\infty$. To get the estimate for $\tilde{h}(\rho)$ in $\eqref{4.1}$, we define the comparison function $\eta(\rho)=C\e^{-\beta(1-\varepsilon)\rho}$, where $C>0$ is a constant to be chosen later, $\varepsilon>0$ is sufficiently small. Then for any $\varepsilon>0$, there exists a sufficiently large $\rho_{\varepsilon}>0$ such that
\begin{eqnarray}
\label{4.14}
\left(\omega-\eta\right)''
&=&\frac{\beta^{2}}{2}\left(\tilde{h}+1\right)\tilde{h}\omega+\frac{2}{\rho}F^{2}\tilde{h}-\beta^{2}\left(1-\varepsilon\right)^{2}\eta\notag\\[1mm]
&\geq&\frac{\beta^{2}}{2}\left(\tilde{h}+1\right)\tilde{h}\left(\omega-\eta\right)+\left[\beta^{2}\left(1-\frac{\varepsilon}{2}\right)^{2}-\beta^{2}\left(1-\varepsilon\right)^{2}\right]\eta\notag\\[1mm]
&\geq&\frac{\beta^{2}}{2}\left(\tilde{h}+1\right)\tilde{h}\left(\omega-\eta\right),
\,\rho>\rho_{\varepsilon}.
\end{eqnarray}
We can choose $C>0$ be large enough to make $(\omega-\eta)(\rho_{\varepsilon})\leq0$. To get the other half of the estimate, we consider $\omega+\eta$ instead. In place of $\eqref{4.14}$, then we have the inequality
\begin{equation*}
\left(\omega+\eta\right)''\leq\frac{\beta^{2}}{2}\left(\tilde{h}+1\right)\tilde{h}\left(\omega+\eta\right)
\end{equation*}
for $\rho$ greater than some large $\rho_{\varepsilon}$. Hence, there holds $(\omega+\eta)(\rho_{\varepsilon})\geq0$, $\rho>\rho_{\varepsilon}$ when the coefficient $C$ in the definition of $\eta$ is made large enough. So the decay estimate for $\tilde{h}-1$ near infinity stated in Lemma \ref{le4.1} is established.

\end{proof}

\subsection{The equation governing the $J$--component}

This subsection is devoted to establish the existence and uniqueness of the solution of the $J$--component equation.

\begin{lemma}\label{le5.1}
Given $J(\rho)$ and $h(\rho)$ as in Lemma \ref{le3.1}, and the resulting function $F(\rho)$, we can find a unique continuously differentiable solution $\tilde{J}(\rho)$ satisfying
\be\label{5.1}
\tilde{J}''\rho^2=2\tilde{J}F^2,~~~\rho>0,
\ee
subject to the condition
\be\label{5.2}
\tilde{J}(0)=0,~~~\tilde{J}'(\infty)=C,
\ee
along with $\tilde{J}'(0)=0$, $\tilde{J}(\rho)$ increasing, $\tilde{J}'(\rho)\leq 1$, and $\rho^{-2}\tilde{J}(\rho)$ decreasing in $\rho$ and bounded as $\rho\to 0$. Besides, $\rho^{-2}\tilde{J}(\rho)\leq \sqrt{\frac{8R^{**}}{3}}C$ for $\rho<1$. Furthermore, we have the sharp asymptotic estimates
\bea
\tilde{J}(\rho)&=&c\rho^2+O(\rho^4),~~~0<c\leq\sqrt{\frac 83 R^{**}}C,~~~\rho\to 0,\label{44}\\
\tilde{J}''(\rho)&=&O(\e^{-2(\sqrt{1-C^2}-\varepsilon)\rho}),~~~\rho\to\infty,~~~\varepsilon>0.\label{55}
\eea
\end{lemma}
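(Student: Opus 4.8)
The plan is to exploit the decisive structural feature that, once $F$ is treated as a known coefficient, equation \eqref{5.1} is \emph{linear and homogeneous} in $\tilde J$: writing it as $\tilde J''=\frac{2F^2}{\rho^2}\tilde J$, the full topological shooting machinery of Lemmas \ref{le3.1} and \ref{le4.1} (with the triples $\mathcal A_i$, $\mathcal B_i$) becomes unnecessary and is replaced by a single scaling argument. I would nonetheless keep the four-step skeleton for uniformity. In Step 1 I set up the local solution near $\rho=0$. Since $F(0)=1$, the indicial part of the operator is again $\frac{d^2}{d\rho^2}-\frac{2}{\rho^2}$ with independent solutions $\rho^2,\rho^{-1}$, and since $\tilde J(0)=0$ forces the $\rho^{-1}$ branch to vanish, I record the solution in the integral form
\[
\tilde J(\rho;c)=c\rho^2+\frac13\int_0^\rho\left(\frac{\rho^2}{r}-\frac{r^2}{\rho}\right)\frac{2\bigl(F^2(r)-1\bigr)}{r^2}\,\tilde J(r;c)\,\dd r,
\]
where $c$ is a shooting parameter and the kernel is bounded near $r=0$ because Lemma \ref{le3.1} gives $1-F(r)\le R^{**}r^2$. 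A Picard iteration identical to Step 1 of Lemma \ref{le3.1} then produces a unique local solution $\tilde J(\rho;c)=c\rho^2+O(\rho^4)$, which already yields \eqref{44} and $\tilde J'(0)=0$, and which depends continuously on $c$.

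For Step 2 the linearity does the work. A linear second-order equation with coefficient $\frac{2F^2}{\rho^2}$ continuous on $(0,\infty)$ has no finite-time blow-up, so the local solution extends to all of $[0,\infty)$ automatically and no sets $\mathcal A_i$ are needed. By homogeneity every solution vanishing at the origin is $\tilde J(\rho;c)=c\,u(\rho)$ for the normalized solution $u$ with $c=1$. The crux is to show that $u'(\infty)$ exists, is finite and strictly positive: since $F=O(\e^{-(1-\varepsilon)\rho})$ by \eqref{11}, the coefficient obeys $\int_1^\infty \rho\,\frac{2F^2}{\rho^2}\,\dd\rho=\int_1^\infty \frac{2F^2}{\rho}\,\dd\rho<\infty$, which is exactly the integrability hypothesis of asymptotic integration; it forces $u(\rho)=A_1\rho+B_1+o(1)$ with $u'(\rho)\to A_1$, and $A_1>0$ because $u'$ is non-decreasing and eventually positive. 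I then \emph{define} the boundary value problem solution by scaling, $\tilde J=\frac{C}{A_1}u$, so that $\tilde J'(\infty)=C$; this simultaneously fixes $c=C/A_1$ and establishes existence. Uniqueness (Step 3) is then immediate: a second solution would differ by a multiple of $u$, and matching $\tilde J'(\infty)=C$ pins that multiple, so the difference vanishes. Alternatively one may run the Sturm comparison used in the previous two lemmas.

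Step 4 collects the qualitative properties. Because $\tilde J\sim c\rho^2>0$ with $c>0$ and $\tilde J''=\frac{2F^2}{\rho^2}\tilde J>0$ wherever $\tilde J>0$, the solution is convex and positive and can never return to zero, hence $\tilde J$ is strictly increasing; consequently $\tilde J'$ is non-decreasing from $\tilde J'(0)=0$ to $\tilde J'(\infty)=C$, giving $0\le\tilde J'(\rho)\le C\le1$. The monotonicity of $\rho^{-2}\tilde J$ follows from the integral representation, whose kernel is non-negative for $r\le\rho$ while $F^2-1\le0$, so the correction term is non-positive and $\rho^{-2}\tilde J(\rho)\le c$; writing $P=\rho^{-2}\tilde J$ one checks $(\rho^4P')'=-2\rho^2(1-F^2)P\le0$, which makes $P$ decreasing. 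The explicit bound $\rho^{-2}\tilde J\le\sqrt{8R^{**}/3}\,C$ for $\rho<1$ is then obtained by quantifying this estimate together with the identity $C=\int_0^\infty\frac{2F^2}{\rho^2}\tilde J\,\dd\rho$ (from integrating \eqref{5.1} with $\tilde J'(0)=0$) and the control $1-F\le R^{**}\rho^2$. Finally \eqref{55} follows by inserting into $\tilde J''=\frac{2F^2}{\rho^2}\tilde J$ the linear growth bound $\tilde J(\rho)\le C\rho$ together with the exponential decay of $F$, whose sharp rate $\sqrt{1-C^2}$ is read off by linearizing \eqref{2.2} at infinity, where $J(\rho)/\rho\to C$ and $h\to1$ turn it into $F''\approx(1-C^2)F$.

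I expect the main obstacle to be Step 2, namely the rigorous proof that $u'(\infty)$ is a finite positive number: this asymptotic-integration step is what converts the exponential decay of $F$ into linear asymptotics for $\tilde J$ and thereby makes the Neumann-type condition $\tilde J'(\infty)=C$ solvable. A secondary technical point is pinning the sharp constant $\sqrt{8R^{**}/3}$ in the near-origin bound, which requires carefully pairing the upper estimate $\tilde J\le c\rho^2$ with the lower control on $F$ near the origin.
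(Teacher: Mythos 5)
Your proposal is correct and rests on the same core mechanism as the paper: Picard iteration on the integral form of \eqref{5.1} near the origin, linearity of the equation in $\tilde J$ to meet the Neumann condition at infinity, and the exponential decay of $F$ to obtain \eqref{55}. Where you genuinely diverge is in how the boundary condition at infinity is handled and in one quantitative step. You normalize by a single solution $u$ and prove $u'(\infty)=A_1\in(0,\infty)$ via asymptotic integration from $\int_1^\infty 2F^2\rho^{-1}\,\dd\rho<\infty$, then scale $\tilde J=(C/A_1)u$; this is cleaner and more rigorous than the paper's presentation, which assumes $\lim_{\rho\to\infty}\tilde J'(\rho)=C$ and only afterwards verifies finiteness through $\tilde J\le c\rho^2$ and the decay of $F$, invoking linearity for the uniqueness of $c$. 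Your identity $(\rho^4P')'=-2\rho^2(1-F^2)P$ for $P=\rho^{-2}\tilde J$ is a tidy replacement for the paper's differentiation of the integral representation in \eqref{5.19}, and your convexity argument for monotonicity and $0\le\tilde J'\le C$ matches the paper's. The one place your sketch is materially thinner is the explicit bound $c\le\sqrt{\tfrac83R^{**}}\,C$: the paper gets it by differentiating the integral equation to obtain $|\tilde J'(\rho)-2c\rho|<\tfrac83cR^{**}\rho^3$ for $\rho\le1$, evaluating at $\rho_0=C/c$, and using $\tilde J'<C$ everywhere (a consequence of $\tilde J''>0$) to force $c^2\le\tfrac83R^{**}C^2$. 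Your proposed route through the identity $C=\int_0^\infty 2F^2\rho^{-2}\tilde J\,\dd\rho$ should yield a bound of the same shape, but landing on that particular constant requires an optimization over the cutoff that you have not carried out; adopting the paper's pointwise derivative estimate is the shorter path to the stated constant.
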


\begin{proof}
We can rewrite equation \eqref{2.4} in an integral form
\be\label{5.3}
\tilde{J}(\rho)=c\rho^2+\frac 23\int_0^\rho(r^{-1}\rho^2-r^2\rho^{-1})\frac{\tilde{J}(\rho)(F^2(\rho)-1)}{r^2}\dd r,
\ee
where $a$ is an arbitrary constant. According to a Picard iteration, we can get a local continuous solution of \eqref{5.1} with $\tilde{J}(0)=0$ for $\rho$ sufficiently small, satisfying
\be\label{5.4}
\tilde{J}(\rho)=c\rho^2+O(\rho^4),~~~\rho\to 0,
\ee
which continuously depends on the parameter $a$. Without loss of generality, we may assume $c>0$. The continuously differentiable of the solution implies that $\tilde{J}'(0)=0$. Otherwise $\tilde{J}'(0)\neq0$, then there exist a number $\delta>0$ and a constant $c_0>0$ such that
\be\label{5.5}
|\tilde{J}'(\rho)|>c_0,~~~\text{as}~~0<\rho<\delta.
\ee
In view of \eqref{5.1}, we see
\be\label{5.6}
\tilde{J}''\rho=\frac{2\tilde{J}F^2}{\rho}.
\ee
Taking limit on the both side of \eqref{5.6} and using \eqref{5.5}, we have
\be\label{5.7}
\lim_{\rho\to\infty}|\tilde{J}''\rho|=\lim_{\rho\to\infty}\left|\frac{2\tilde{J}F^2}{\rho}\right|=|2\tilde{J}'(0)|>2c_0,
\ee
namely,
\be\label{5.8}
|\tilde{J}''(\rho)|>\frac{2c_0}{\rho},~~~\rho\in(0,\delta),
\ee
which contracts with \eqref{5.4}. Consequently, $\tilde{J}'(0)=0$. Further, the equation \eqref{5.3} indicates that, if $c>0$, then $\tilde{J}(\rho)>0$ and $\tilde{J}'(\rho)>0$.

Assuming
\be\label{5.9}
\lim_{\rho\to\infty}\tilde{J}'(\rho)=C.
\ee
If $C$ is finite, then the linearity of the equation \eqref{5.1} for $\tilde{J}$ tells us that there is a unique $c$ satisfies \eqref{5.9}, and then
\be\label{5.10}
\tilde{J}'(\rho)<C,~~~\rho\in(0,\infty)
\ee
because of $\tilde{J}''(\rho)>0$.

In the following, we will show $C<\infty$. Note that $r<\rho$ and $F<1$, we see, the integral on the right-hand side of \eqref{5.3} is negative. Thus
\be\label{5.11}
\tilde{J}(\rho)<c\rho^2.
\ee
While for large $\rho$, the equation \eqref{2.2} indicates
\be\label{5.12}
F''\sim(1-C^2)F.
\ee
Multiplying the both side of \eqref{5.12} by $F'$ and integrating over $(\rho,\infty)$, we get $F'/F\sim-\sqrt{1-C^2}$, from which we obtain
\be\label{5.13}
F(\rho)=O(\e^{-(\sqrt{1-C^2}-\varepsilon)\rho}),~~~\rho\to\infty,~~\varepsilon>0.
\ee
Therefore, \eqref{5.1}, \eqref{5.11} and \eqref{5.13} indicate that
\be\label{5.14}
\tilde{J}''(\rho)=O(\e^{-2(\sqrt{1-C^2}-\varepsilon)\rho}),~~~\rho\to\infty,~~\varepsilon>0.
\ee
Thus, $\tilde{J}'(\infty)=C$ is finite. Particularly, $0\leq C<1$.

Differentiating \eqref{5.3}, we have
\be\label{5.15}
\tilde{J}'(\rho)=2c\rho+\frac 23\int_0^\rho(2\rho r^{-1}+\rho^{-2}r^2)\frac{\tilde{J}(r)(F^2(r)-1)}{r^2}\dd r,
\ee
then
\be\label{5.16}
|\tilde{J}'(\rho)-2c\rho|=\frac 23\left|\int_0^\rho(2\rho r^{-1}+\rho^{-2}r^2)\frac{\tilde{J}(r)(F^2(r)-1)}{r^2}\dd r\right|<\frac 83 c R^{**}\rho^3.
\ee
For large $c$, set $\rho_0=C/c$. If $c^2>\frac 83 R^{**}C^2$, then we know from \eqref{5.16} that, at $\rho_0$
\be\label{5.17}
|\tilde{J}'-2C|<\frac 83 c^{-2} R^{**}C^3<C,
\ee
which means $\tilde{J}'>C$. But in view of \eqref{5.10}, the critical value of $c$, say $c_0$, certainly satisfies
\be\label{5.18}
c_0^2\leq\frac 83 R^{**}C^2.
\ee
To proceed, \eqref{5.3} gives us that
\be\label{5.19}
\left(\frac{\tilde{J}(\rho)}{\rho^2}\right)'=\frac 23\int_0^\rho3r^2\rho^{-4}\frac{\tilde{J}(r)(F^2(r)-1)}{r^2}\dd r<0,~~~\rho\to 0,
\ee
so that
\be\label{5.20}
\tilde{J}(\rho)\leq c_0\rho^2,~~~\rho\to 0.
\ee
Hence, for $\rho\leq1$, we obtain
\be\label{5.21}
\tilde{J}(\rho)\leq \sqrt{\frac{8R^{**}}{3}}C\rho^2.
\ee
\end{proof}

\subsection{Proof of solution to the full equations}

With the lemmas established above, we are ready to start solving the complete Julia--Zee dyon equations \eqref{2.2}--\eqref{2.4} under the boundary conditions \eqref{2.5}--\eqref{2.6}. That is, we are now going to prove our main result, Theorem \ref{th2.1}.

We first define a Banach space $\mathscr{B}$ as follows
\be\nn
\mathscr{B}=\left\{\left(J(\rho),h(\rho)\right)|J(\rho),h(\rho)\in C\left([0,+\infty)\right),\rho^{-1-k}\left(1+\rho^{k}\right)J(\rho),\rho^{-k}\left(1+\rho^{k}\right)h(\rho)\,\mbox{are bounded}\right\}
\ee
with the norm
\begin{eqnarray*}
\left\|\left(J(\rho),h(\rho)\right)\right\|_{\mathscr{B}}=\sup_{\rho\in[0,+\infty)}\left\{|\rho^{-1-k}\left(1+\rho^{k}\right)J(\rho)|+|\rho^{-k}\left(1+\rho^{k}\right)h(\rho)|\right\},
\end{eqnarray*}
where $0<k<1$. There is no difficulty in checking that this is indeed a Banach space.

From Lemmas \ref{le3.1}--\ref{le5.1}, we observe that for given $\left(J(\rho),h(\rho)\right)\in C\left([0,+\infty)\right)$ satisfying $0<J(\rho),\,\rho h(\rho)\leq \rho^{1+k}R^{*}$ $(0<k<1)$ for all $0<\rho\leq1$ in Lemma \ref{le3.1}, we respectively get the unique corresponding increasing solutions $\tilde{J}(\rho),\,\tilde{h}(\rho)$ in Lemma \ref{le4.1} and \ref{le5.1} satisfying $\rho^{-1}\tilde{h}(\rho)\leq R^{***}$ and $\rho^{-2}\tilde{J}(\rho)\leq R^{***}$ for $\rho\leq1$, where $R^{***}=\max\{M,\sqrt{\frac{8R^{**}}{3}}C\}>0$ only depend on the choice of $\beta>0$, $R^{**}>0$, $0\leq C<1$ and a positive constant $N$. Noting that if $|J(\rho)|\leq \rho^2R^{***}$, $|h(\rho)|\leq \rho R^{***}$ for all $\rho\leq1$, then there are $\delta_{1}>0$, $\delta_{2}>0$ such that $|J(\rho)|\leq \rho^2R^{***}\leq\rho^{1+k}R^{*}$, $|h(\rho)|\leq \rho R^{***}\leq\rho^{1+k}R^{*}$ for $0<\rho<\delta=\min\{\delta_{1},\delta_{2}\}$. Next by the proceeding of proofs in \eqref{3.28}, \eqref{4.13} and \eqref{5.21}, we have $|\tilde{J}(\rho)|\leq \rho^2R^{***}$, $|\tilde{h}(\rho)|\leq \rho R^{***}$ for all $\rho\leq1$.

After that, we define the non-empty, bounded, closed, convex subset $\mathcal{S}$ of $\mathscr{B}$ as
\begin{eqnarray*}
&&\mathcal{S}=\big\{\left(J(\rho),h(\rho)\right)\in\mathscr{B}\,\big|~\,|\rho^{-2}J(\rho)|\leq R^{***},\,|\rho^{-1}h(\rho)|\leq R^{***},\,\forall\rho\leq1;\,J(\rho),\,h(\rho)\,\,\mbox{is increasing};\\[1mm]
&&~~~~~~~~~~~~~~~~~~~~~~~~~~~~~~~~~~~\,0<\frac{J(\rho)}{\rho}\leq C,\,0<h(\rho)\leq1,\,\forall\rho>0;\,h(\infty)=1,\,J'(\infty)=C\big\},
\end{eqnarray*}
where $R^{***}=R^{***}\{R^{**},N,\beta, C\}$ and $N$ is a positive constant. It is straightforward to examine that the set $\mathcal{S}$ is indeed nonempty, bounded, closed and convex.

It is worth stressing that the process of the Lemma \ref{le4.1} and Lemma \ref{le5.1} defines the mapping $T{\rm:}\,(J,h)\rightarrow(\tilde{J},\tilde{h})$ on $\mathcal{S}$ which maps $\mathcal{S}$ into itself. In the following, we will show that the mapping $T$ is continuous and compact. Then the Schauder fixed point theorem guarantees that $T$ has a fixed point, and the theorem \ref{th2.1} is proved.

In order to obtain that $T$ is continuous, we shall prove that if $\left\|\left((J_{1},h_{1})-(J_{2},h_{2})\right)\right\|_{\mathscr{B}}\rightarrow0$, then $\left\|T\left(J_{1},h_{1}\right)-T\left(J_{2},h_{2}\right)\right\|_{\mathscr{B}}\rightarrow0,\,\forall\left(J_{1},h_{1}\right),\left(J_{2},h_{2}\right)\in\mathcal{S}$. We can assume $T(J_{i})=\tilde{J}_{i}$, $T(h_{i})=\tilde{h}_{i}$, $i=1,2$. Obviously, $\tilde{h}_{1}\left(\infty\right)=1$, $\tilde{h}_{2}\left(\infty\right)=1$, then for any $\varepsilon>0$, there exists an $R>0$ adequately large so that
\begin{eqnarray}
\label{6.1}
\sup_{\rho\in[R,+\infty)}\left|\rho^{-k}\left(1+\rho^{k}\right)\left[\tilde{h}_{1}\left(\infty\right)-\tilde{h}_{2}\left(\infty\right)\right]\right|<\varepsilon.
\end{eqnarray}
Similarly, we can obtain
\begin{eqnarray}
\label{6.2}
&&\sup_{\rho\in[R,+\infty)}\left|\rho^{-1-k}\left(1+\rho^{k}\right)\left[\tilde{J}_{1}\left(\infty\right)-\tilde{J}_{2}\left(\infty\right)\right]\right|\notag\\[2mm]
&&<\sup_{\rho\in[R,+\infty)}\left|\rho^{-k}\left(1+\rho^{k}\right)\left[\tilde{J}_{1}\left(\infty\right)-\tilde{J}_{2}\left(\infty\right)\right]\right|<\varepsilon.
\end{eqnarray}
For above given $\varepsilon$, there exists a sufficiently small $\delta>0$ such that
\begin{eqnarray}
\label{6.3}
\sup_{\rho\in(0,\delta]}\left|\rho^{-k}\left(1+\rho^{k}\right)\left[\tilde{h}_{1}(0)-\tilde{h}_{2}(0)\right]\right|<\varepsilon,
\end{eqnarray}
in view of $\tilde{h}_{1}(0)=\tilde{h}_{2}(0)=0$. Likewise, we have
\begin{eqnarray}
\label{6.4}
\sup_{\rho\in(0,\delta]}\left|\rho^{-1-k}(1+\rho^{k})\left[\tilde{J}_{1}(0))-\tilde{J}_{2}(0)\right]\right|<\varepsilon.
\end{eqnarray}
According to the continuity of $T$ over $[\delta,R]$, then $T$ is continuous on $\mathscr{B}$.

In the following, we concentrate on the compact of $T$. To prove that the mapping $T$ is compact, we will demonstrate that the mapping $T$ map arbitrary bounded sets in $\mathcal{S}$ to precompact sets. That is, if $\left\{J_{n}(\rho),h_{n}(\rho)\right\}$ is an arbitrary bounded sequence in $\mathcal{S}$, then we show that $\{\tilde{J}_n(\rho),\tilde{h}_n(\rho)\}$ have a convergent sub--sequence in $\mathcal{S}$. Take $\{\tilde{h}_n(\rho)\}$ as an example. Applying Lemma \ref{le4.1} and $\{\tilde{h}_n(\rho)\}$ in $\mathcal{S}$, then by straightforward calculations we have
\begin{eqnarray*}
\left|(\rho\tilde{h}_n(\rho))'\right|&\leq&2b_{0}\rho+\dfrac{4}{3}\int_0^\rho\left|\frac{r^{2}}{\rho^{2}}\bigg[\frac{2}{r^{2}}\left(F^{2}(r)-1\right)+\frac{\beta^{2}}{2}\left(\frac{H^{2}(r)}{r^{2}}-1\right)\bigg]H(r)\right|\dd r\notag\\[2mm]
&\leq&2b_{0}\rho+\dfrac{4}{3}\int_0^\rho\frac{r^{2}}{\rho^{2}}\bigg[\frac{2}{r^{2}}\left(2r^{2}R^{**}\right)+\frac{\beta^{2}}{2}\bigg]r^{2}R^{***}\dd r\notag\\[2mm]
&\leq&2b_{0}\rho+\left(4R^{**}+\frac{\beta^{2}}{2}\right)\rho^{3}R^{***},
\end{eqnarray*}
that is, $(\rho\tilde{h}_n(\rho))'$ is bounded on any closed subinterval of $(0,\infty)$. In other words, there is an $M>0$ such that $|(\rho\tilde{h}_n(\rho))'|\leq M$ on any closed subinterval of $(0,\infty)$, where $M$ does not dependent of $n$. Using the mean value theorem, for any $\varepsilon>0,\,\rho_{1},\rho_{2}\in[\delta,R]\subset(0, \infty)$ there exists a $\delta=\frac{\varepsilon}{M+1}>0$ such that $|\rho_{1}\tilde{h}_n(\rho_{1})-\rho_{2}\tilde{h}_n(\rho_{2})|=|(\xi\tilde{h}_n(\xi))'||\rho_{1}-\rho_{2}|<\varepsilon$ as $|\rho_{1}-\rho_{2}|<\delta$, where $\xi$ is between $\rho_{1}$ and $\rho_{2}$. Thus $\{\rho\tilde{h}_n(\rho)\}$ is equicontinuity, so is $\{\tilde{h}_n(\rho)\}$. It is easy to see that $\{\tilde{h}_n(\rho)\}$ is uniformly bounded in view of  $\{\tilde{h}_n(\rho)\}$ in $\mathcal{S}$. By the Arzela--Ascoli theorem, there is a subsequence of $\{\tilde{h}_n(\rho)\}$ in $\mathcal{S}$ and the continuous function $\tilde{h}(\rho)$ such that the subsequence of $\{\tilde{h}_n(\rho)\}$ uniformly converges to $\tilde{h}(\rho)$ in any compact subinterval of $(0,\infty)$ $($denoted as $[\delta,R])$, hence we have
\begin{eqnarray}
\label{6.5}
\sup_{\rho\in[\delta,R]}\big|\rho^{-k}(1+\rho^{k})(\tilde{h}_n(\rho)-\tilde{h}(\rho))\big|\leq\left(1+\delta^{-k}\right)
\sup_{\rho\in[\delta,R]}\big|\tilde{h}_n(\rho)-\tilde{h}(\rho)\big|\rightarrow0\,(n\rightarrow\infty).
\end{eqnarray}
Similarly, we note that, given any sequence $\{J_{n}(\rho)\}$ in $\mathcal{S}$ that is bounded in the norm for $\mathscr{B}$. Then we have easy access to that $\{\tilde{J_{n}}'(\rho)\}$ is bounded in the usual norm for $\mathscr{B}$, so that $\{\tilde{J_{n}}(\rho)\}$ is bounded in any finite $\rho$-interval. According to the Arzela-Ascoli theorem, we can assure that there is a subsequence of $\{\tilde{J_{n}}(\rho)\}$ which converges uniformly in any such subinterval. Then the only question is whether for the limit functions, say $(\tilde{J}(\rho),\tilde{h}(\rho))$, it is true that
\begin{equation*}
||(\tilde{J_{n}}(\rho),\tilde{h}_n(\rho))-(\tilde{J}(\rho),\tilde{h}(\rho))||_{\mathscr{B}}\rightarrow0~~~\mbox{as}~~n\rightarrow\infty,
\end{equation*}
and the only possible difficulties occur as $\rho\rightarrow0$ or as $\rho\rightarrow\infty$ $($that is in interval $(0,\delta)$ and $(R,+\infty)$$)$.

As $\rho\rightarrow0$, since $\{\tilde{h}_n(\rho)\},\tilde{h}(\rho)\in\mathcal{S}$, then we get $\tilde{h}_n(\rho)\leq\rho R^{***},$\,$\tilde{h}(\rho)\leq\rho R^{***}$ for all $\rho\leq1$. Moreover, for given $\varepsilon>0$, we can find a $\delta=\left(\frac{\varepsilon}{4R^{***}}\right)^{\frac{1}{1-k}}$ small enough so that
\begin{eqnarray*}
\sup_{\rho\in(0,\delta)}|\rho^{-k}(1+\rho^{k})(\tilde{h}_n(\rho)-\tilde{h}(\rho))|<2\sup_{\rho\in(0,\delta)}|\rho^{1-k}\rho^{-1}(\tilde{h}_n(\rho)-\tilde{h}(\rho))|<4\delta^{1-k}R^{***}\leq\varepsilon.
\end{eqnarray*}
With $\delta>0$ fixed, we can then find $n$ large enough that
\begin{eqnarray}
\label{6.6}
\sup_{\rho\in(0,\delta)}\big|\rho^{-k}(1+\rho^{k})(\tilde{h}_n(\rho)-\tilde{h}(\rho))\big|\leq\varepsilon.
\end{eqnarray}
Likewise, for given $\varepsilon>0$, we can find a $\delta=\left(\frac{\varepsilon}{4R^{***}}\right)^{\frac{1}{1-k}}$ sufficiently small that
\begin{eqnarray*}
\sup_{\rho\in(0,\delta)}|\rho^{-1-k}(1+\rho^{k})(\tilde{J_{n}}(\rho)-\tilde{J}(\rho))|<2\sup_{\rho\in(0,\delta)}|\rho^{1-k}\rho^{-2}(\tilde{J_{n}}(\rho)-\tilde{J}(\rho))|<4\delta^{1-k}R^{***}\leq\varepsilon.
\end{eqnarray*}
With $\delta>0$ fixed, we can then find $n$ large enough that
\begin{eqnarray}
\label{6.7}
\sup_{\rho\in(0,\delta)}\big|\rho^{-1-k}(1+\rho^{k})(\tilde{J_{n}}(\rho)-\tilde{J}(\rho))\big|\leq\varepsilon.
\end{eqnarray}

As $\rho\rightarrow+\infty$, integrating $\eqref{4.3}$ from $\rho$ to $\infty$ and applying $\lim_{\rho\rightarrow\infty}(\rho\tilde{h}_n(\rho))'=1$, we get
\begin{equation}
\label{6.8}
(\rho\tilde{h}_n(\rho))'-1=-\int_\rho^{+\infty}\left[\dfrac{2}{r}F_{n}^{2}(r)\tilde{h}_n(r)+\dfrac{\beta^{2}}{2}r\tilde{h}_n(r)
(\tilde{h}_n^{2}(r)-1)\right]\dd r.
\end{equation}
Together with $\eqref{11}$ and $\eqref{33}$, we have
\begin{equation}
\label{6.9}
|\tilde{h}_n(\rho)-1|\leq|\rho\tilde{h}_n'(\rho)|+\left|\int_\rho^{+\infty}\left[\dfrac{2}{r}F_{n}^{2}(r)\tilde{h}_n(r)+\dfrac{\beta^{2}}{2}r
\tilde{h}_n(r)(\tilde{h}_n^{2}(r)-1)\right]\dd r\right|\rightarrow0,~\rho\rightarrow\infty.
\end{equation}
It is easy to see that $\sup_{\rho\in(R,+\infty)}|\rho^{-k}(1+\rho^{k})(\tilde{h}_n(\rho)-1)|\rightarrow0$ uniformly as $n\rightarrow\infty$ when we choose $R>0$ sufficiently large. As for $\{\tilde{J_{n}}(\rho)\}$, we have to prove that $\{\tilde{J_{n}}'(\rho)\}$ are uniformly convergent for $\rho\gg1$.
As $\rho\rightarrow+\infty$, integrating $\eqref{5.1}$ from $\rho$ to $\infty$ and applying $\lim_{\rho\rightarrow\infty}(\tilde{J_{n}}'(\rho))=C$, we get
\begin{equation}
\label{6.10}
\tilde{J_{n}}'(\rho)-C=-\int_\rho^{+\infty}\dfrac{2\tilde{J_{n}}(r)F_{n}^{2}(r)}{r^{2}}\dd r.
\end{equation}
Noting that
\begin{equation*}
\lim\limits_{\rho\rightarrow\infty}\frac{\tilde{J_{n}}(\rho)}{\rho}=\lim\limits_{\rho\rightarrow\infty}\tilde{J_{n}}'(\rho)=C,~~0<C<1.
\end{equation*}
Together with $\eqref{11}$, we have
\begin{equation}
\label{6.11}
|\tilde{J_{n}}'(\rho)-C|\leq\int_\rho^{+\infty}\left|\dfrac{2\tilde{J_{n}}(r)F_{n}^{2}(r)}{r^{2}}\right|\dd r\rightarrow0,~\rho\rightarrow\infty.
\end{equation}
It is easy to see that $\sup_{\rho\in(R,+\infty)}|\rho^{-1-k}(1+\rho^{k})(\tilde{J_{n}}'(\rho)-C)|\rightarrow0$ uniformly as $n\rightarrow\infty$ when we choose $R>0$ sufficiently large. In summary, the mapping $T$ is compact.

In addition, we may use \eqref{2.13} to calculate the electric charge of the solution in the original variables,
\bea
q_e&=&\frac{1}{4\pi}\lim_{r\to\infty}\oint_{|x|=r}\mathbf{E}\cdot \dd \mathbf{S}\nn\\
&=&\frac{1}{4\pi}\lim_{r\to\infty}\int_{|x|<r}\nabla\cdot\mathbf{E} \dd x\nn\\
&=&\frac{1}{4\pi g}\int_{\mathbb{R}^3}\partial_i\left\{\frac{x^i h(r)}{r}\frac{\dd}{\dd r}\left(\frac{J(r)}{r}\right)\right\}\dd x\nn\\
&=&\frac{1}{g}\int_0^\infty \left(r^2 h(r)\left(\frac{J(r)}{r}\right)'\right)'\dd r\nn\\
&=&\frac{1}{g}\int_0^\infty \left(rh'J'+rh J''-h'J\right)\dd r\nn\\
&=&\frac{1}{g}\int_0^\infty \left(\frac{2F^2Jh}{r}-h'J+rh'J'\right)\dd r.
\eea
Clearly, according to the asymptotic estimates of $F, J$ and $h$, we see the integral above is convergent.

The conclusion follows exactly as in Theorem \ref{th2.1}.
\hfill$\Box$\vskip7pt

\noindent{\textbf{DATA AVAILABILITY}}

The date that support the findings of this study are available from the corresponding author upon reasonable request.

\end{document}